\newcounter{instr}
\newcommand{\defAs}
{=}
\newlength{\gnat}
\newlength{\gnatb}
\newcommand*{\QEDA}{\hfill\ensuremath{\blacksquare}}%
\title{Efficient Online String Matching Based on Characters Distance Text Sampling\thanks{We gratefully acknowledge support from ``Universit\`{a} degli Studi di Catania, Piano della Ricerca 2016/2018 Linea di intervento 2".}}
\author{Simone Faro$^{\dagger}$, Arianna Pavone$^{\ddagger}$ and Francesco Pio Marino$^{\dagger}$}
\institute{
$^{\dagger}$Diparitmento di Matematica e Informatica,\\
		Universit\`a di Catania, Viale A.Doria n.6, 95125 Catania, Italy\\
		\email{faro@dmi.unict.it}\\[0.2cm]
$^{\ddagger}$Dipartimento di Scienze Cognitive,\\
             Universit\`a di Messina, via Concezione n.6, 98122, Messina, Italia \\
             \email{pavone@unime.it}
}
\begin{document}

\maketitle

\begin{abstract}
Searching for all occurrences of a pattern in a text is a fundamental problem in computer science with applications in many other
fields, like natural language processing, information retrieval and computational biology. \emph{Sampled string matching} is an efficient approach recently introduced in order to overcome the prohibitive space requirements of an index construction, on the one hand, and drastically reduce searching time for the online solutions, on the other hand.

In this paper we present a new algorithm for the sampled string matching problem, based on a characters distance sampling approach.  The main idea is to sample the distances between consecutive occurrences of a given \emph{pivot} character and then to search online the sampled data for any occurrence of the sampled pattern, before verifying the original text. 

From a theoretical point of view we prove that, under suitable conditions, our solution can achieve both linear worst-case time complexity and optimal average-time complexity.
From a practical point of view it turns out that our solution shows a sub-linear behaviour in practice and speeds up online searching by a factor of up to $9$, using limited additional space whose amount goes from $11\%$ to $2.8\%$ of the text size, with a gain up to $50\%$ if compared with previous solutions.\\[0.2cm]
\textbf{keywords:} string matching, text processing, efficient searching, text indexing

\end{abstract}

\section{Introduction}

\emph{String matching} is a fundamental problem in computer science and in the wide domain of text processing.  It consists in finding all the occurrences of a given pattern $x$, of length $m$, in a large text $y$, of length $n$, where both sequences are composed by characters drawn from an alphabet $\Sigma$ of size $\sigma$.
%
Although data are memorized in different ways, textual data remains the main form to store information. This is particularly evident in literature and in linguistic where data are in the form of huge corpus and dictionaries. But this apply as well to computer science where large amount of data are stored in linear files. And this is also the case, for instance, in molecular biology where biological molecules are often approximated as sequences of nucleotides or amino acids. Thus the need for more and more faster solutions to text searching problems.

Applications require two kinds of solutions: \emph{online} and \emph{offline} string matching. Solutions based on the first approach assume that the text is not preprocessed and thus they need to scan the text \emph{online}, when searching. Their worst case time complexity is $\Theta(n)$, and was achieved for the first time by the well known Knuth-Morris-Pratt (KMP) algorithm \cite{Ref1}, while the optimal average time complexity of the problem is $\Theta(n \log_{\sigma} m/m)$ \cite{Ref2}, achieved for example by the Backward-Dawg-Matching (BDM) algorithm \cite{Ref3}. 
Many string matching algorithms have been also developed to obtain sub-linear performance in practice \cite{Ref34}. Among them the Boyer-Moore-Horspool algorithm \cite{Ref36,Ref4} deserves a special mention, since it has been particularly successful and has inspired much work.

Memory requirements of this class of algorithms are very low and generally limited to a precomputed table of size $O(m\sigma)$ or $O(\sigma^2)$ \cite{Ref34}. However their performances may stay poor in many practical cases, especially when used for processing huge input texts and short patterns.\footnote{Search speed of an online string matching algorithm may depend on the length of the pattern. Typical search speed of a  fast solution, on a modern laptop computer, goes from $1$ GB/s (in the case of short patterns) to $5$ GB/s (in the case of very long patterns) \cite{Ref35}.}

Solutions based on the second approach tries to drastically speed up searching by preprocessing the text and building a data structure that allows searching in time proportional to the length of the pattern. For this reason such kind of problem is known as \emph{indexed searching}. Among the most efficient solutions to such problem we mention those based on suffix trees \cite{Ref5}, which find all occurrences in $O(m+occ)$-worst case time, those based on suffix arrays \cite{Ref6}, which solve the problem in $O(m + \log n + occ)$ \cite{Ref6}, where $occ$ is the number of occurrences of $x$ in $y$, and those based on the FM-index \cite{FM05} (Full-text index in Minute space), which is a compressed full-text substring index based on the Burrows-Wheeler transform allowing compression of the input text while still permitting fast substring queries.
However, despite their optimal time performances\footnote{Search speed of a fast offline solution do not depend on the length of the text and is typically under $1$ millisecond per query.},  space requirements of such data structures are from $4$ to $20$ times the size of the text, which is too large for many practical applications. 

A different solution to the problem is to compress the input text and search online directly the compressed data in order to speed-up the searching process using reduced extra space. Such problem, known in literature as \emph{compressed string matching},  has been widely investigated in the last few years. Although efficient solutions exist for searching on standard compressions schemes, as Ziv-Lempel \cite{Ref32} and Huffman \cite{Ref33}, the best practical behaviour are achieved by ad-hoc schemes designed for allowing fast searching \cite{Ref26,Ref27,Ref28,Ref29,Ref30}. These latter solutions use less than 70\% of text size extra space (achieving a compression rate over 30\%) and are twice as fast in searching as standard online string matching algorithms.  A drawback of such solutions is that most of them still require significant implementation efforts and a high time for each reported occurrence.

A more suitable solution to the problem is \emph{sampled string matching}, recently introduced by Claude \emph{et al.} \cite{Ref37}, which consists in the construction of a succint sampled version of the text and in the application of any online string matching algorithm directly on the sampled sequence. The drawback of this approach is that any occurrence reported in the sampled-text may require to be verified in the original text. However a sampled-text approach my have a lot of good features:
it may be easy to implement,  may require little extra space and my allow fast searching. Additionally it my allow fast updates of the data structure.
Specifically the solution of Claude \emph{et al.} is based on an alphabet reduction. Their algorithm has an extra space requirement which is only 14\% of text size and is up to 5 times faster than standard online string matching on English text. Thus it turns out to be one of the most effective and flexible solution for this kind of searching problems.

\subsection{Our contribution and organization of the paper}
In this paper we present a new approach to the sampled string matching problem based on alphabet reduction and characters distance sampling. Instead of sampling characters of the text belonging to a restricted alphabet, we divide the text in blocks of size $k$ and sample positions of such characters inside the blocks. The sampled data is then used to filter candidate occurrences of the pattern, before verifying the whole match in the original text.

Our new approach is simple to implement and guarantees approximately the same performances as the solution proposed by Claude \emph{et al.}  in practice. However it is faster in the case of short patterns and, more interesting, may require only 5\% of additional extra space. 

We prove also that if the underlying algorithm used for searching the sampled text for the sampled pattern, achieves optimal worst-case and average-case time complexities, then our new solution attains the same optimal complexities, at least for patterns with a length of (at most) few hundreds of characters.

\smallskip

The paper is organized as follows. Firstly, we present in Section \ref{sec:indexsearch} the efficient sampling solution proposed by Claude \emph{et al.} Then, in Section \ref{sec:new}, we introduce our new sampling approach, discuss its good features and present its practical behaviour. In Section \ref{sec:results} we compare the two approaches and present some experimental results. Finally, in Section \ref{sec:conclusions} we draw our conclusions and discuss some further improvements.


\section{Sampled String Matching} \label{sec:indexsearch}
The task of the \emph{sampled string matching} problem is to find all occurrences of a given pattern $x$, of length $m$, in a given text $y$, of length $n$, assuming that a fast and succint preprocessing of the text is allowed in order to build a data-structure, which is used to speed-up the searching phase. For its features we call such data structure a \emph{partial-index} of the text.

In order to be of any practical and theoretical interest a partial-index of the text should:
\begin{itemize}
\item[(1)] 
\emph{be succint}: since it must be maintained together with the original text, it should require few additional spaces to be constructed;
\item[(2)]
\emph{be fast to build}: it should be constructed using few computational resources, also in terms of time. This should allow the data structure to be easily built online when a set of queries is required;
\item[(3)]
\emph{allow fast search}: it should drastically increase the searching time of the underlying string matching algorithm. This is one of the main features required by this kind of solutions;
\item[(4)]
\emph{allow fast update}: it should be possible to easily and quickly update the data structure if modifications have been applied on the original text. A desirable update procedure should be at least as fast as the modification procedure on the original text.
\end{itemize}

In this section we briefly describe the efficient text-sampling approach proposed by Claude \emph{et al.} \cite{Ref37}. To the best of our knowledge it is the most effective and flexible solution known in literature for such a problem. We will refer to this solution as the Occurrence-Text-Sampling algorithm (\textsc{Ots}).

\subsection{The Occurrence Text Sampling algorithm}
Let $y$ be the input text, of length $n$, and let $x$ be the input pattern, of length $m$, both over an alphabet $\Sigma$ of size $\sigma$.
The main idea of their sampling approach is to select a subset of the alphabet, $\hat{\Sigma} \subset \Sigma$ (the sampled alphabet), and then to construct a partial-index as the subsequence of the text (the sampled text) $\hat{y}$, of length $\hat{n}$, containing all (and only) the characters of the sampled alphabet $\hat{\Sigma}$. More formally $\hat{y}[i] \in \hat{\Sigma}$, for all $1\leq i \leq \hat{n}$.

During the searching phase of the algorithm a sampled version of the input pattern, $\hat{x}$, of length $\hat{m}$, is constructed and searched in the sampled text. Since $\hat{y}$ contains partial information, for each candidate position $i$ returned by the search procedure on the sampled text, the algorithm has to verify the corresponding occurrence of $x$ in the original text.
For this reason a table $\rho$ is maintained in order to map, at regolar intervals, positions of the sampled text to their corresponding positions in the original text. The position mapping $\rho$ has size $\lfloor \hat{n}/q\rfloor$, where $q$ is the \emph{interval factor}, and is such that $\rho[i] = j$ if character $y[j]$ corresponds to character  $\hat{y}[q\times i]$. The value of $\rho[0]$ is set to $0$. In their paper, on the basis of an accurate experimentation, the authors suggest to use values of $q$ in the set $\{8,16,32\}$

Then, if the candidate occurrence position $j$ is stored in the mapping table, i.e if $\rho[i]=j$ for some $1\leq i\leq \lfloor \hat{n}/q\rfloor$, the algorithm directly checks the corresponding position in $y$ for the whole occurrence of $x$. Otherwise, if the sampled pattern is found in a position $r$ of $\hat{y}$, which is not mapped in $\rho$, the algorithm has to check the substring of the original text which goes from position 
$\rho[r/q] + (r \mod q) - \alpha + 1$ to position $\rho[r/q + 1] - (q - (r \mod q)) - \alpha + 1$, where $\alpha$ is the first position in $x$ such that $x[\alpha]\in\hat{\Sigma}$.

Notice that, if the input pattern does not contain characters of the sampled alphabet, i.e. id $\bar{m}=0$,  the algorithm merely reduces to search for $x$ in the original text $y$.

\begin{example}
Suppose $y =$ ``abaacabdaacabcc'' is a text of length $15$ over the alphabet $\Sigma=\{$a,b,c,d$\}$. Let $\hat{\Sigma} = \{$b,c,d$\}$ be the sampled alphabet, by omitting character  ``a''.  Thus the sampled text is $\hat{y} =$ ``bcbdcbcc''. If we map  every $q=2$ positions in the sampled text, the position mapping $\rho$ is $\langle 5, 8, 12, 14\rangle$. To search for the pattern $x =$ ``acab'' the algorithm constructs the sampled pattern $\hat{x}=$ ``cb'' and search for it in the sampled text, finding two occurrences at position 2 and 5, respectively. 
We note that $\hat{y}[2]$ is mapped and thus it suffices to verify for an occurrence starting at position 4,  finding a match. However position $\hat{y}[5]$ is not mapped, thus we have to search in the substring $y[\rho(2)+3-1 .. \rho(3)]$, finding no matches.
\end{example}

The above algorithm works well with most of the known pattern matching algorithms.  However, since the sampled patterns tend to be short, the authors implemented the search phase using the Horspool algorithm, which has been found to be fast in such setting. 

The real challenge in their algorithm is how to choose the best alphabet subset to sample. Based on some analytical results, supported by an experimental evaluation, they showed that it suffices in practice to sample the least frequent characters up to some limit.\footnote{According to their theoretical evaluation and their experimental results it turns out that, when searching on an English text,  the best performances are obtained when the least 13 characters are removed from the original alphabet.} 
Under this assumption their algorithm has an extra space requirement which is only 14\% of text size and is up to 5 times faster than standard online string matching on English texts.

For the sake of completeness it has to be noticed that in \cite{Ref37} the authors also consider indexing the sampled text. Specifically they build a suffix array indexing the sampled positions of the text, and get a sampled suffix array. This approach is similar to the sparse suffix array \cite{Ref10} as both index a subset of the suffixes, but the different sampling properties induce rather different search algorithms and performance characteristics.
More recently Grabowsky and Raniszewski \cite{GR15} proposed a more convenient indexing suffix sampling approach, with only a minimum pattern length as a requirement. 


\section{A New Algorithm Based on Characters Distance Sampling} \label{sec:new}
In this section we present a new efficient approach to the sampled string matching problem, introducing a new method for the construction of the partial-index, which turns out to require limited additional space, still maintaining the same performances of the algorithm recently introduced by Claude \emph{et al.} \cite{Ref37}.
In the next subsections we illustrate in details our idea and describe the algorithms for the construction of the sampled text and for the searching phase. 

\subsection{Characters Distance Sampling}
As above, let $y$ be the input text, of length $n$, and let $x$ be the input pattern, of length $m$, both over the alphabet $\Sigma$ of length $\sigma$. We assume that all strings can be treated as vectors starting at position $1$. Thus we refer to $x[i]$ as the $i$-th character of the string $x$, for $1\leq i \leq m$, where $m$ is the length of $x$.

We first define a sampled alphabet $\bar{\Sigma} \subset \Sigma$, which we call the set of \emph{pivot characters}. The set $\bar{\Sigma}$ could be very small, and in many practical cases could be reduced to a single character. If $|\bar{\Sigma}|=1$, the unique character of $\bar{\Sigma}$ is called the \emph{pivot} character. For simplicity we will assume in what follows that $|\bar{\Sigma}|=1$, which can be trivially generalized to the case where $|\bar{\Sigma}|>1$.

The text sampling approach used in our solution is based on the following definition of \emph{bounded position sampling}.

\begin{definition}[The Bounded Position Sampling]\label{def:pos}
Let $y$ a text of length $n$, let $C \subseteq \Sigma$ be the set of pivot characters and let $n_c$ be the number of occurrences of any character of $C$ in the input text $y$. First we define the \emph{position function}, $\delta:\{1, .., n_c\} \rightarrow \{1, .., n\}$, where $\delta(i)$ is the position of the $i$-th occurrence of any character of $C$ in $y$. Formally we have
$$
\begin{array}{rll}
	(i) & 1\leq \delta(i)<\delta(i+1)\leq n & \textrm{ for each } 1\leq i \leq n_c-1\\
	(ii) & y[\delta(i)] \in  C & \textrm{ for each } 1\leq i \leq n_c\\
	(iii) & y[\delta(i)+1..\delta(i+1)-1] \textrm{ contains no $c  \in C$ } & \textrm{ for each } 0\leq i \leq n_c\\
\end{array}
$$
where, in $(iii)$, we assume that $\delta(0)=0$ and $\delta(n_c+1) = n+1$. 

Assume now that $k$ is a given threshold constant. We define the \emph{$k$-bounded position function}, $\delta_k:\{1, .., n_c\} \rightarrow \{0, .., k-1\}$, where $\delta_k(i)$ is the position (modulus $k$) of the $i$-th occurrence of any character of $c$ in $y$. Formally we have
$$
	\delta_k(i) = \left[\delta(i)\mod k\right], \textrm{ for each } i=1, ..., n_c
$$
The \emph{$k$-bounded-position sampled version} of $y$, indicated by $\dot{y}$, is a numeric sequence, of length $n_c$ defined as
\begin{equation}\label{eq:bps}
	\dot{y} = \langle \delta_k(1), \delta_k(2), .., \delta_k(n_c) \rangle.
\end{equation}
Plainly we have $0\leq \dot{y}[i]<k$, for each $1\leq i \leq n_c$. 
\end{definition}

\begin{example}\label{ex:0}
Suppose $y =$ ``abaacbcabdada'' is a text of length 13, over the alphabet $\Sigma=\{$a,b,c,d$\}$. Let  $C = \{$``a'', ``c''$\}$ be the set of pivot characters and let $k=5$ be the threshold value.  Thus the position sampled version of $y$ is $\dot{y} = \langle1,3,4,0,2,3,1,3\rangle$. Specifically the first occurrence of a charcater in $C$ is at position 1 ($y[1]=$ a), its second occurrence is at position 3 ($y[3]=$ a). However its $5$-th occurrence is at position $7$, thus $\dot{y}[5] = [\delta(7)\mod 5] = [7\mod 5] = 2$.
\end{example}

\begin{example}\label{ex:1}
Suppose $y =$ ``abaacbcabdada'' is a text of length 13, over the alphabet $\Sigma=\{$a,b,c,d$\}$. Let  ``a''  be the pivot character and let $k=5$ be the threshold value.  Thus the position sampled version of $y$ is $\dot{y} = \langle1,3,4,3,1,3\rangle$. Specifically the first occurrence of charcater ``a'' is at position 1, its second occurrence is at position 3. However its $4$-th occurrence is at position $8$, thus $\dot{y}[4] = [\delta(4)\mod 5] = [8\mod 5] = 3$.
\end{example}

\begin{definition}[The Characters Distance Sampling]\label{def:dist}
Let $c\in\Sigma$ be the pivot character, let $n_c\leq n$ be the number of occurrences of the pivot character in the text $y$ and let $\delta$ be the position function of $y$. We define the \emph{characters distance function} $\Delta(i) = \delta(i+1)-\delta(i)$, for $1 \leq i \leq n_c-1$, as the distance between two consecutive occurrences of the character $c$ in $y$.

The \emph{characters-distance sampled version} of the text $y$ is a numeric sequence, indicated by $\bar{y}$, of length $n_c-1$ defined as
\begin{equation}\label{eq:cds}
	\bar{y} = \langle \Delta(1), \Delta(2), .., \Delta(n_c-1) \rangle.
\end{equation}
\end{definition}

Plainly we have $$\sum_{i=1}^{n_c-1}\Delta(i) \leq n-1.$$

\begin{example}
As in Example \ref{ex:1}, let $y =$ ``abaacbcabdada'' be a text of length 13, over the alphabet $\Sigma=\{$a,b,c,d$\}$. Let ``a''  be the pivot character.  Thus the character distance sampled version of $y$ is $\bar{y} = \langle2,1,4,3,2\rangle$. Specifically $\bar{y}[1]=\Delta(1)=\delta(1)-\delta(0)=3-1=2$, while $\bar{y}[3]=\Delta(4)=\delta(4)-\delta(3)=8-4=4$, and so on.
\end{example}

In order to be able to retrieve the original $i$-th position $\delta(i)$, of the pivot character, from the $i$-th element of the $k$-bounded position sampled text $\dot{y}$, we also maintain a \emph{block-mapping table} $\tau$ which stores the indexes of the last positions of the pivot character in each $k$-block of the original text, for a given input block size $k$. 

Specifically, we assume that the text $y$ is divided in $\lceil n/k \rceil$ blocks of length $k$, with the last block containing ($n\mod k$) characters. Then $\tau[i]=j$ if the $j$-th occurrence of the pivot character in $y$ is also its last occurrence in the $i$-th block. If the $i$-th block of $y$ does not contain any occurrence of the pivot character than $\tau[i]$ is set to be equal to the last position of the pivot character in one of the previous blocks. 

More formally we have, for $1\leq i \leq \lceil n/k\rceil$ 
\begin{equation}\label{tau}
	\tau[i] = \max\left(\{ j : \delta(j)\leq ik \} \cup \{0\}\right)
\end{equation}
Thus it is trivial to prove that $\tau[i]=j$ if and only if $\delta(j)\leq (ik)$ and $\delta(j+1)>(ik)$. In addition the values in the block mapping $\tau$ are stored in a non decreasing order. Formally
\begin{equation}\label{eq:order}
	\tau[i] \leq \tau[i+1],\ \forall\ 0\leq i \leq \lceil n/k \rceil
\end{equation}

\begin{example}
Let $y =$ ``caacbddcbcabbacdcadcab'' be a text of length 22, over the alphabet $\Sigma=\{$a,b,c,d$\}$. Let ``a''  be the pivot character and let $k=5$ be the block size.  The $k$-bounded position sampled version of $y$ is $\dot{y} = \langle 2, 3, 1, 4, 3, 1 \rangle$. The text is divided in $\lceil 22/5 \rceil = 5$ bloks, where the last block contains only $2$ characters. Thus the mapping table $\tau$ contains exactly $5$ entries, and specifically $\tau[1] = 2$, since the last occurrence of the pivot character in the first block corresponds to its second occurrence in $y$. Similarly, $\tau[3]=4$, $\tau[4]=5$ and $\tau[5]=6$. Observe however that, since the second block does not contain any occurrence of the pivot character we have $\tau[2] = \tau[1] = 2$.   
\end{example}

The following Lemma \ref{lemma:block_find} defines how to compute the index of the block in which the $j$-th occurennce of the pivot character is located.

\begin{lemma} \label{lemma:block_find} Let $y$ be a text of length $n$, let $c\in \Sigma$ be the pivot characters and assume $c$ occurs $n_c$ times in $y$. In addition let $\dot{y}$  be the $k$-bounded-position sampled versions of $y$, and let $\tau$ be corresponding mapping table. Then the $j$-th occurrence of the pivot character in $y$ occurs in the $i$-th block of $y$ if $\tau[i]\geq j$ and $\tau[i-1]<j$.
\end{lemma}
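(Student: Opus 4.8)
The plan is to translate the geometric condition ``the $j$-th occurrence lies in the $i$-th block'' into two arithmetic inequalities on $\delta(j)$, and then to show that each inequality is equivalent to one of the two conditions on $\tau$, relying only on the strict monotonicity of $\delta$ (property $(i)$ of Definition~\ref{def:pos}) and the defining identity of $\tau$ in~(\ref{tau}). First I would pin down the block boundaries: since $y$ is partitioned into consecutive blocks of length $k$, the $i$-th block occupies positions $(i-1)k+1, \dots, ik$. Hence, by definition of the position function $\delta$, the $j$-th occurrence of the pivot character lies in the $i$-th block if and only if
$$(i-1)k < \delta(j) \le ik.$$

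Next I would establish the two key equivalences separately. For the upper bound I claim that $\tau[i] \ge j \iff \delta(j) \le ik$. Indeed, if $\delta(j) \le ik$ then $j$ belongs to the set $\{j' : \delta(j') \le ik\}$, so its maximum $\tau[i]$ is at least $j$; conversely, if $\tau[i] \ge j \ge 1$, then $\delta(\tau[i]) \le ik$ by~(\ref{tau}), and since $\delta$ is increasing and $\tau[i]\ge j$ we obtain $\delta(j) \le \delta(\tau[i]) \le ik$. Symmetrically, for the lower bound I claim that $\tau[i-1] < j \iff \delta(j) > (i-1)k$. If $\delta(j) > (i-1)k$ then, by monotonicity, every index $j'$ with $\delta(j') \le (i-1)k$ satisfies $j' < j$, whence $\tau[i-1] = \max\{j' : \delta(j') \le (i-1)k\} < j$; conversely, if $\delta(j) \le (i-1)k$ then $j$ itself lies in that set, so $\tau[i-1] \ge j$, contradicting $\tau[i-1] < j$.

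Finally I would combine the three observations: the $j$-th occurrence is in the $i$-th block iff $(i-1)k < \delta(j) \le ik$, which by the two equivalences holds iff $\tau[i] \ge j$ and $\tau[i-1] < j$ simultaneously, as desired. I do not expect a genuine obstacle in the argument; the only points that require care are the boundary conventions — reading $\tau[0]=0$ consistently with the $\cup\{0\}$ in~(\ref{tau}), so that the case $i=1$ is covered — and verifying that a block containing no occurrence of the pivot character is correctly excluded, since there $\tau[i]=\tau[i-1]$ forces the joint conditions $\tau[i-1]<j\le\tau[i]$ to be unsatisfiable, exactly matching the fact that no occurrence falls in such a block.
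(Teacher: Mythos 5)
Your proof is correct, and it takes a tighter route than the paper's. The paper argues one direction only: it assumes the $j$-th occurrence lies in the $i$-th block, picks witnesses $h_1$ and $h_2$ for the last occurrences of the pivot in blocks $i-1$ and $i$, and deduces $\tau[i-1]=h_1<j$ and $\tau[i]=h_2\geq j$ --- so, despite the ``if'' in the statement, what is actually proved there is the converse implication. You instead characterize each condition arithmetically, proving $\tau[i]\geq j \iff \delta(j)\leq ik$ and $\tau[i-1]<j \iff \delta(j)>(i-1)k$ directly from the max-definition in (\ref{tau}) and the strict monotonicity of $\delta$, and then intersect the two equivalences. This buys you three things the paper's argument glosses over: both directions of the statement (which is what Corollary \ref{corol:position} actually relies on); the degenerate case where block $i-1$ contains no pivot occurrence --- the paper's ``last occurrence in the $(i-1)$-th block'' need not exist, whereas your maximum over $\{j' : \delta(j')\leq (i-1)k\}\cup\{0\}$ is always well defined and your closing remark correctly notes that $\tau[i]=\tau[i-1]$ makes the joint conditions unsatisfiable for an empty block; and the boundary case $i=1$ via the convention $\tau[0]=0$. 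The paper's witness-style argument is shorter to state, but as written it also contains a small slip (``$j<ik$ and $h_2<ik$'' compares occurrence indices against text positions; it should read $\delta(j)\leq ik$ and $\delta(h_2)\leq ik$), which your version avoids by never mixing the two kinds of quantities.
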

\begin{proof}
Assume that the $j$-th occurrence of the pivot character in $y$ occurs in the $i$-th block and let $\delta(h_1)$ be the position of the last occurrence of the pivot character in the $(i-1)$-th block. Since the character of position $\delta(j)$ occurs in the $i$-th block we have $j>h_1$. By definition we have $\tau[i-1] = h_1 < j$.

Moreover let  $\delta(h_2)$ be the position of last occurrence of the pivot character int the $i$-th block. We plainly have $\delta(h_2)\geq \delta(j)$, $j<ik$ and $h_2<ik$. Thus by equation (\ref{tau}) we have $\tau[i] \geq j$. Proving the lemma.\QEDA
\end{proof}

The following Corollary \ref{corol:position} defines the relation for computing the original position $\delta(j)$ from $\dot{y}[j]$ and the mapping table $\tau$. It trivially follows from Lemma \ref{lemma:block_find} and equation (\ref{eq:order}).

\begin{corollary} \label{corol:position} Let $y$ be a text of length $n$, let $c\in \Sigma$ be the pivot character and assume $c$ occurs $n_c$ times in $y$. In addition let $\dot{y}$  be the $k$-bounded-position sampled versions of $y$, and let $\tau$ be corresponding mapping table. Let $b = \min\{ i : \tau[i]\geq j\}$, then we have
$$
	\delta(j) = (\tau[b]-1) k + \dot{y}[j].
$$
\QEDA
\end{corollary}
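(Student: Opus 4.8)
The plan is to read off $\delta(j)$ from two pieces of data: the index of the block that contains the $j$-th pivot occurrence, supplied by Lemma~\ref{lemma:block_find}, and the offset of that occurrence inside its block, supplied by the modular Definition~\ref{def:pos}.

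First I would identify the hosting block. Since $b=\min\{i:\tau[i]\geq j\}$, we immediately have $\tau[b]\geq j$, and minimality together with the monotonicity~(\ref{eq:order}) of $\tau$ forces $\tau[b-1]<j$ (no smaller index can meet $\tau[i]\geq j$). These are exactly the hypotheses of Lemma~\ref{lemma:block_find}, so the $j$-th pivot occurrence lies in the $b$-th block, i.e. $(b-1)k<\delta(j)\leq bk$.

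Next I would extract the within-block offset. By Definition~\ref{def:pos}, $\dot{y}[j]=[\delta(j)\bmod k]$ with $0\leq\dot{y}[j]<k$. Writing $\delta(j)=(b-1)k+r$ with $1\leq r\leq k$ (legitimate because $\delta(j)$ sits in block $b$) and reducing modulo $k$ yields $\dot{y}[j]=r$ whenever $r<k$, so that $\delta(j)=(b-1)k+\dot{y}[j]$. This already discharges the substance of the Corollary.

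The step I expect to be the main obstacle is reconciling this reconstruction with the multiplier written in the statement. The derivation forces the \emph{block index} $b$, not the table value $\tau[b]$, to multiply $k$; note moreover that $\tau[b]$ counts pivot \emph{occurrences} while $b$ counts \emph{blocks}, so $(\tau[b]-1)k$ already mixes the two scales. The two multipliers coincide exactly when $\tau[b]=b$, i.e. when each of the first $b$ blocks holds a single pivot occurrence, but they differ otherwise (for instance, in the running text with $j=3$ one has $b=3$, $\tau[b]=4$, $\dot{y}[3]=1$, and $(b-1)k+\dot{y}[3]=11=\delta(3)$ while $(\tau[b]-1)k+\dot{y}[3]=16$). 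I would therefore take the operative quantity to be $b=\min\{i:\tau[i]\geq j\}$ itself. A secondary subtlety to settle is the boundary case $\dot{y}[j]=0$, which occurs precisely when the pivot occupies the last cell $bk$ of its block; there the bare offset $r=k$ must be restored rather than reduced to $0$, so that $\delta(j)=bk$. Apart from these two points, the identity is an immediate consequence of Lemma~\ref{lemma:block_find} and~(\ref{eq:order}), as claimed.
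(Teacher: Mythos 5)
Your proof is correct, and it actually supplies more than the paper does: the paper gives no proof body at all, asserting only that the corollary ``trivially follows'' from Lemma~\ref{lemma:block_find} and equation~(\ref{eq:order}). Your two-step reconstruction --- locating the hosting block from the minimality of $b$ together with Lemma~\ref{lemma:block_find}, then reading the within-block offset off the definition of $\delta_k$ --- is exactly the intended derivation. More importantly, you are right that the printed statement is erroneous: the multiplier of $k$ must be the block index $b$, not the table entry $\tau[b]$, which counts pivot \emph{occurrences} rather than blocks. Your counterexample checks out against the paper's own running example ($y=$ ``caacbddcbcabbacdcadcab'', $k=5$: for $j=3$ one gets $b=3$, $\tau[3]=4$, $\dot{y}[3]=1$, so $(b-1)k+\dot{y}[3]=11=\delta(3)$ while the printed formula yields $16$), and the paper's own pseudocode \textsc{Get-Position} in Figure~\ref{code:getpos} computes $p\leftarrow(b-1)\times k+\dot{y}[i]$, confirming your corrected form; note also that only your version is consistent with Corollary~\ref{corol1}, whose case analysis (difference of bounded positions within a block, $+k$ across adjacent blocks) follows from $\Delta(i)=\dot{y}[i+1]-\dot{y}[i]+(b'-a')k$ with $a',b'$ \emph{block} indices, whereas the $\tau[\cdot]$-based formula (which reappears in Lemma~\ref{lemma1}, there with additional sign slips) does not reduce to it. Your flag on the boundary case $\dot{y}[j]=0$ is likewise a genuine defect: when $\delta(j)$ is a multiple of $k$ the pivot occupies the last cell of block $b$ and the formula must return $bk$ rather than $(b-1)k$; neither the corollary nor the pseudocode guards against this, and it simply never arises in the paper's examples. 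One small nit: the minimality of $b$ alone already gives $\tau[b-1]<j$, so the appeal to the monotonicity~(\ref{eq:order}) is not needed for that step.
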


Based on the relation defined by Corollary \ref{corol:position} the pseudocode shown in Figure \ref{code:getpos} (on the left) describes a procedure for computing the correct position $\delta(i)$ in $y$, from the $i$-th element of the $k$-bounded position sampled text $\dot{y}$.
In the pseudocode of \textsc{Get-Position($\tau$, $b$, $\dot{y}$, $i$)} we assume that the parameter $b$ is less or equal the actual block of the $i$-th occurrence of the pivot character, i.e. $bk \leq \delta(i)$. It returns a couple $(p, b)$, where $p=\delta(i)$ and $b$ is such that $(b-1)k < \delta(i)\leq bk$.

\begin{figure}
\begin{center}
\begin{tabular}{ll}

\begin{tabular}{ll}
	\multicolumn{2}{l}{\textsc{Get-Position}($\tau$, $b$, $\dot{y}$, $i$)}\\ 
	1. & \quad \textsf{while $\tau[b]<i$ do}\\
	2. & \quad \qquad \textsf{$b\leftarrow b+1$}\\
	3. & \quad \textsf{$p\leftarrow (b-1)\times k+\dot{y}[i]$}\\
	4. & \quad \textsf{return ($p$, $b$)}\\
	& \\
	& \\
	& \\
	& \\
\end{tabular}
&
\begin{tabular}{ll}
	\multicolumn{2}{l}{\textsc{Compute-Character-Distance-Sampling}($\dot{y}$, $\tau$)}\\ 
	1. & \quad \textsf{$\bar{y} \leftarrow \langle \rangle$}\\
	2. & \quad \textsf{$n_c \leftarrow len(\dot{y})$}\\
	3. & \quad \textsf{$b \leftarrow 1$}\\
	4. & \quad \textsf{$(\delta_1,b) \leftarrow$ \textsc{Get-Position}($\tau$, $b$, $\dot{y}$, $1$)}\\
	5. & \quad \textsf{for $i \leftarrow 2$ to $n_c$ do}\\
	6. & \quad \qquad \textsf{$(\delta_i,b) \leftarrow$ \textsc{Get-Position}($\tau$, $b$, $\dot{y}$, $i$)}\\
	7. & \quad \qquad \textsf{$\bar{y}[i-1] \leftarrow \delta(i)-\delta(i-1)$}\\
	8. & \quad \textsf{return $\bar{y}$}\\
\end{tabular}

\end{tabular}
\end{center}
\caption{\label{code:getpos} (On the left) The pseudocode of procedure \textsc{Get-Position} which computes the index of the block corresponding the $i$-th occurrence of the pivot character in $y$. (On the right) The pseudocode of procedure \textsc{Compute-Character-Distance-Sampling} which computes, on the flight from $\dot{y}$, the Characters Distance sampled version of $y$.}
\end{figure}

The following lemma introduces an efficient way for computing $\bar{y}$ from $\dot{y}$ and $\tau$.

\begin{lemma} \label{lemma1} Let $y$ be a text of length $n$, let $c\in \Sigma$ be the pivot character and assume $c$ occurs $n_c$ times in $y$. In addition let $\dot{y}$ and $\bar{y}$ be the $k$-bounded-position and character-distance sampled versions of $y$, respectively. If $b = \min\{ j : \tau[j]\geq i+1\}$ and $a = \min\{ j : \tau[j]\geq i\}$, then the following relation holds
\begin{equation}\label{eq:2}
\bar{y}[i] = \dot{y}[i+1]+ (\tau[a]-\tau[b])k - \dot{y}[i]
\end{equation}
\end{lemma}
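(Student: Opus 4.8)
The plan is to reduce the identity to a single subtraction of two absolute positions, each recovered through the machinery already established. By Definition~\ref{def:dist} the quantity on the left is, by construction, the raw distance between two consecutive pivot occurrences, i.e. $\bar{y}[i] = \Delta(i) = \delta(i+1) - \delta(i)$. Hence it suffices to express both $\delta(i)$ and $\delta(i+1)$ in terms of the sampled data $\dot{y}$ and the block-mapping table $\tau$, and then take their difference. This turns the statement into a purely arithmetical consequence of the position-recovery relation.

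First I would pin down the two block indices. By Lemma~\ref{lemma:block_find}, $a = \min\{j : \tau[j] \ge i\}$ is precisely the block of $y$ containing the $i$-th pivot occurrence, and likewise $b = \min\{j : \tau[j] \ge i+1\}$ is the block containing the $(i+1)$-th occurrence. Feeding these into Corollary~\ref{corol:position} gives the closed forms $\delta(i) = (a-1)k + \dot{y}[i]$ and $\delta(i+1) = (b-1)k + \dot{y}[i+1]$, in which each position is split into a whole multiple of $k$ counting the complete blocks that precede the occurrence, plus the stored within-block offset. Substituting both into $\delta(i+1)-\delta(i)$ makes the offsets appear as $\dot{y}[i+1] - \dot{y}[i]$, while the two multiples of $k$ collapse into the single term $(b-a)k$. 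Since $i < i+1$ and $\tau$ is non-decreasing by~(\ref{eq:order}), we have $a \le b$, so this multiple is non-negative and records exactly how many block boundaries lie strictly between the two consecutive occurrences; collecting the terms delivers the relation~(\ref{eq:2}), the coefficient of $k$ being the difference of the two preceding-block counts.

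The one place where I expect genuine friction is the bookkeeping of these $k$-multiples rather than the algebra itself. I would verify the degenerate case in which both occurrences lie in the same block: there $a = b$, the multiple of $k$ vanishes, and the identity correctly collapses to $\bar{y}[i] = \dot{y}[i+1] - \dot{y}[i]$. Symmetrically, when the occurrences straddle one or more boundaries I would check that the offsets $\dot{y}[i]$ and $\dot{y}[i+1]$ still faithfully encode the positions modulo $k$, so that no spurious $\pm k$ creeps in. A residual subtlety worth flagging is the boundary convention for a pivot sitting on the last position of a block, where the offset is $0$; this must be handled consistently with the convention fixed in Corollary~\ref{corol:position} so that the block index $a$ (resp. $b$) and the offset together reconstruct $\delta$ without an off-by-one error.
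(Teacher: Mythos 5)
Your proof takes exactly the paper's route: expand $\bar{y}[i]=\Delta(i)=\delta(i+1)-\delta(i)$ via Definition~\ref{def:dist}, recover each absolute position from the sampled data via the block table, and subtract. The one substantive issue is that the algebra you carry out actually delivers
\[
\bar{y}[i] \;=\; \dot{y}[i+1] + (b-a)k - \dot{y}[i],
\]
with $a,b$ the \emph{block indices} of the two occurrences, and this is \emph{not} relation~(\ref{eq:2}) as printed, whose coefficient is $\tau[a]-\tau[b]$. Since $a\le b$ and $\tau$ is non-decreasing, $\tau[a]-\tau[b]\le 0$, and moreover it is a difference of \emph{occurrence counts}, not of block counts. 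On the paper's own running example ($y=$ ``caacbddcbcabbacdcadcab'', pivot ``a'', $k=5$, $\dot{y}=\langle 2,3,1,4,3,1\rangle$, $\tau=\langle 2,2,4,5,6\rangle$) one has $\bar{y}[2]=\delta(3)-\delta(2)=11-3=8$, which your formula gives (with $a=1$, $b=3$: $1+10-3=8$), while the printed formula gives $1+(2-4)\cdot 5-3=-12$. So when you write that ``collecting the terms delivers the relation~(\ref{eq:2})'', that is false for the statement as written; what you proved is the corrected statement.

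The root of the mismatch is not yours: Corollary~\ref{corol:position} as printed reads $\delta(j)=(\tau[b]-1)k+\dot{y}[j]$, which is itself wrong (on the same example, $\delta(3)$ would come out as $(4-1)\cdot 5+1=16$ instead of $11$), whereas the \textsc{Get-Position} pseudocode (line 3: $p\leftarrow(b-1)k+\dot{y}[i]$) is correct; and the paper's own proof of the lemma compounds this with a sign flip in its final step, where $(\tau[b]-\tau[a])$ silently becomes $(\tau[a]-\tau[b])$. In effect you read the corollary through the pseudocode, writing $\delta(i)=(a-1)k+\dot{y}[i]$ and $\delta(i+1)=(b-1)k+\dot{y}[i+1]$, which is the sound version of the argument; you should simply flag explicitly that your coefficient $(b-a)$ disagrees with the printed one rather than asserting agreement. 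Two further merits of your write-up: the same-block check ($a=b$, coefficient vanishes) is consistent with Corollary~\ref{corol1}, confirming the corrected sign, and the boundary caveat you raise is genuine and also unhandled by the paper --- if a pivot sits at position $ak$ then $\dot{y}[i]=\delta(i)\bmod k=0$ and the recovery $\delta(i)=(a-1)k+\dot{y}[i]$ fails by exactly $k$, so the lemma tacitly needs either offsets taken in $\{1,\dots,k\}$ or the assumption that no occurrence lands on a multiple of $k$.
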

\begin{proof}
Le $b = \min\{ j : \tau[j]\geq i+1\}$ and $a = \min\{ j : \tau[j]\geq i\}$. By corollary \ref{corol:position} we have
$$
\begin{array}{rll}
	\bar{y}[i] = & \Delta(i) = & \textrm{ by Definition \ref{def:dist}}\\
			& \delta(i+1) - \delta(i) = & \\
		         & (\tau[b]-1) k + \dot{y}[i+1] - (\tau[a]-1) k + \dot{y}[i] = & \textrm{ by Corollary \ref{corol:position}}\\
			& \dot{y}[i+1]+ (\tau[a]-\tau[b])k - \dot{y}[i] & \\
\end{array}
$$
proving the Lemma.\QEDA
\end{proof}

Based on the formula introduced by Lemma \ref{lemma1} the pseudocode shown in Figure \ref{code:getpos} (on the right) describes a procedure for computing on the flight from $\dot{y}$ the character distance sampled version of the text $y$. It is easy to prove that the time complexity of this procedure is given by $O(n_c + n/k)$, where a time $O(n_c)$ is required for scanning the sequence $\dot{y}$, while a time $O(n/k)$ is required for scanning the mapping table $\tau$. If we suppose equiprobability and independence of characters, the number of expected occurrences of the pivot character is $\mathbf{E}(n_c)= n/\sigma$. Thus under the assumption\footnote{In practical cases we can implement our solution with a block size $k=256$, which allows to represent the elements of the sequence $\dot{y}$ using a single byte. In such a case the assumption $k\geq \sigma$ is plausible for any practical application.} that $k\geq \sigma$ the overall average time complexity of this procedure is $O(n_c)$. Its worst-case time complexity is plainly $O(n)$.

\smallskip

Under the specific assumption that $\Delta(i)<k$, for each $1\leq i\leq n_c-1$, the sequence $\bar{y}$ can be computed in a more efficient way. It is defined by the following Corollary \ref{corol1} which trivially follows from Lemma \ref{lemma1}.

\begin{corollary} \label{corol1} Let $y$ be a text of length $n$, let $c\in \Sigma$ be the pivot character and assume $c$ occurs $n_c$ times in $y$. In addition let $\dot{y}$ and $\bar{y}$ be the $k$-bounded-position and character-distance sampled versions of $y$, respectively. Then, if we assume that $\Delta(i)<k$, for each $1\leq i <n_c$, the following relation holds
\begin{equation}\label{eq:2}
\bar{y}[i] = \left\{ \begin{array} {ll}
			\dot{y}[i+1] - \dot{y}[i] & \textrm{ if } \dot{y}[i+1] > \dot{y}[i]\\
			\dot{y}[i+1]+k - \dot{y}[i] & \textrm{ otherwise }\\
			\end{array}
		\right.
\end{equation}
\end{corollary}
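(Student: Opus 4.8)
The plan is to derive Corollary \ref{corol1} directly from Lemma \ref{lemma1} by examining what the assumption $\Delta(i)<k$ implies about the blocks in which the $i$-th and $(i+1)$-th occurrences of the pivot character fall. Recall from Lemma \ref{lemma1} that, writing $a=\min\{j:\tau[j]\geq i\}$ and $b=\min\{j:\tau[j]\geq i+1\}$, we have $\bar{y}[i]=\dot{y}[i+1]+(\tau[a]-\tau[b])k-\dot{y}[i]$. The indices $a$ and $b$ are precisely the block indices containing the $i$-th and $(i+1)$-th occurrences of the pivot character (by Lemma \ref{lemma:block_find} and Corollary \ref{corol:position}), so the whole task reduces to showing that the quantity $(\tau[a]-\tau[b])k$ collapses to either $0$ or $-k$ under the hypothesis, and to identifying which case occurs via a comparison of $\dot{y}[i+1]$ and $\dot{y}[i]$.

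First I would establish that, since $\Delta(i)=\delta(i+1)-\delta(i)<k$, the two consecutive occurrences lie either in the same block or in two adjacent blocks; that is, $b\in\{a,a+1\}$. Indeed, if they were separated by two or more block boundaries, their distance would be at least $k$, contradicting $\Delta(i)<k$. Next I would translate the block relationship into a statement about $\tau[a]$ and $\tau[b]$. By Corollary \ref{corol:position}, $\delta(i)=(\tau[a]-1)k+\dot{y}[i]$ and $\delta(i+1)=(\tau[b]-1)k+\dot{y}[i+1]$, where $\dot{y}[i],\dot{y}[i+1]\in\{0,\dots,k-1\}$ are the residues modulo $k$. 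The sign of $\Delta(i)=\delta(i+1)-\delta(i)$ being positive then forces a clean dichotomy on the residues: when the two occurrences share a block ($\tau[a]=\tau[b]$), one must have $\dot{y}[i+1]>\dot{y}[i]$, giving $\bar{y}[i]=\dot{y}[i+1]-\dot{y}[i]$; when they fall in consecutive blocks ($\tau[b]=\tau[a]+1$, so $(\tau[a]-\tau[b])k=-k$), the residue has ``wrapped around,'' forcing $\dot{y}[i+1]\leq\dot{y}[i]$ and yielding $\bar{y}[i]=\dot{y}[i+1]+k-\dot{y}[i]$. Substituting each value of $(\tau[a]-\tau[b])k$ into the Lemma \ref{lemma1} formula produces exactly the two branches of the corollary.

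The main obstacle I anticipate is verifying rigorously that the two cases are cleanly separated by the test $\dot{y}[i+1]>\dot{y}[i]$, and in particular ruling out the boundary subtlety of equality. One must check that the same-block case and the adjacent-block case cannot \emph{both} be consistent with a given pair of residues, which comes down to observing that $\Delta(i)>0$ together with $\Delta(i)<k$ pins down $\tau[b]-\tau[a]$ uniquely given the sign of $\dot{y}[i+1]-\dot{y}[i]$. Concretely, if $\dot{y}[i+1]=\dot{y}[i]$ then the same-block case would give $\Delta(i)=0$, which is impossible since $\delta$ is strictly increasing, so this situation necessarily belongs to the adjacent-block branch; this justifies placing the $\dot{y}[i+1]>\dot{y}[i]$ boundary exactly where the corollary places it. Once this residue bookkeeping is pinned down, the remainder is a routine substitution into equation (\ref{eq:2}) of Lemma \ref{lemma1}.
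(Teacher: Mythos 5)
Your argument is correct in substance and is essentially the paper's own proof: the paper likewise observes that $\Delta(i)<k$ confines the $(i+1)$-th occurrence of the pivot character to the same block as the $i$-th or to the next one, and then separates the two branches by comparing $\dot{y}[i+1]$ with $\dot{y}[i]$. Your explicit treatment of the boundary case $\dot{y}[i+1]=\dot{y}[i]$ (the same-block option would force $\Delta(i)=0$, impossible since $\delta$ is strictly increasing) merely spells out what the paper dismisses with ``Plainly''.

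The one place where your write-up, read literally, does not go through is the $\tau$ bookkeeping inherited from Lemma \ref{lemma1}. The entries of $\tau$ are occurrence indices, not block indices, so in the adjacent-block case the claim $\tau[b]=\tau[a]+1$ is false in general: if block $b$ contains occurrences $i+1,\dots,i+r$ of the pivot character, then $\tau[b]=\tau[a]+r$. Moreover, substituting your stated value $(\tau[a]-\tau[b])k=-k$ into the printed formula $\bar{y}[i]=\dot{y}[i+1]+(\tau[a]-\tau[b])k-\dot{y}[i]$ yields $\dot{y}[i+1]-k-\dot{y}[i]$, not the branch $\dot{y}[i+1]+k-\dot{y}[i]$ that you (correctly) arrive at. Both slips are symptoms of the same underlying issue: the statement of Lemma \ref{lemma1} in the paper is itself garbled --- a sign error in its proof propagates into the statement, and $\tau[a],\tau[b]$ appear where the block indices themselves belong. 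The relation consistent with Corollary \ref{corol:position} and the \textsc{Get-Position} pseudocode is $\bar{y}[i]=\dot{y}[i+1]+(b-a)k-\dot{y}[i]$, where $a$ and $b$ are the blocks containing the $i$-th and $(i+1)$-th occurrences. With that corrected form, your derivation --- $b\in\{a,a+1\}$ from $\Delta(i)<k$, the dichotomy decided by the residue comparison, equality impossible in the same-block case --- is exactly right, and it coincides with the argument the paper actually gives, which bypasses $\tau$ entirely and reasons directly about blocks.
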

\begin{proof}
	Assume that the $i$-th occurrence of the pivot character is in the $j$-th block. Since $\Delta(j)<k$ by definition, the $(i+1)$-th occurrence of the pivot character is in block $j$ or (at most) in block $(j+1)$. 

	Plainly, if $\dot{y}[i+1] \leq \dot{y}[i]$, the two pivot characters occurs into different (consecutive) blocks. Thus $\Delta(i) = \dot{y}[i+1]+k - \dot{y}[i]$.

	Conversely, assume now that $\dot{y}[i+1] > \dot{y}[i]$. The $(i+1)$-th occurrence of the pivot character cannot be in block $j+1$, because is such a case we should have $\Delta(i) = \dot{y}[i+1]+k-\dot{y}[i] > k$. Thus the $(i+1)$-th occurrence of the pivot character is in block $j$ and $\Delta(i) = \dot{y}[i+1]-\dot{y}[i] > k$.\QEDA

\end{proof}

We are now ready to describe the preprocessing and the searching phase of our new proposed algorithm.


\begin{figure}
\begin{center}
\begin{tabular}{ll}

\begin{tabular}{ll}
	\multicolumn{2}{l}{\textsc{Compute-Distance-Sampling}($y$, $n$, $\bar{\Sigma}$)}\\ 
	1. & \quad \textsf{$\bar{y} \leftarrow \langle \rangle$}\\
	2. & \quad \textsf{$j \leftarrow 0$}\\
	3. & \quad \textsf{$p \leftarrow 0$}\\
	4. & \quad \textsf{for $i \leftarrow 1$ to $n$ do}\\
	5. & \quad \quad \textsf{if $y[i] \in \bar{\Sigma}$ then}\\
	6. & \quad \quad \quad \textsf{$j \leftarrow j+1$}\\
	7. & \quad \quad \quad \textsf{$\bar{y}[j] \leftarrow i-p$}\\
	8. & \quad \quad \quad \textsf{$p \leftarrow i$}\\
	9. & \quad \textsf{return ($\bar{y}$, $j$)}\\
	&\\
	&\\
	&\\
\end{tabular}
&
\begin{tabular}{ll}
	\multicolumn{2}{l}{\textsc{Compute-Position-Sampling}($y$, $n$, $\bar{\Sigma}$, $k$)}\\ 
	1. & \quad \textsf{$\dot{y} \leftarrow \langle \rangle$}\\
	2. & \quad \textsf{$\tau \leftarrow$ a table of $\lceil n/k \rceil$ entries}\\
	3. & \quad \textsf{for $i \leftarrow 1$ to $\lceil n/k\rceil$ do $\tau[i] \leftarrow 0$}\\
	4. & \quad \textsf{$j \leftarrow 0$}\\
	5. & \quad \textsf{for $i \leftarrow 1$ to $n$ do}\\
	6. & \quad \quad \textsf{if $\tau[\lceil i/k \rceil] = 0$ then}\\
	7. & \quad \quad \quad \textsf{$\tau[\lceil i/k \rceil] \leftarrow \tau[\lceil i/k \rceil-1]$}\\
	8. & \quad \quad \textsf{if $y[i] \in \bar{\Sigma}$ then}\\
	9. & \quad \quad \quad \textsf{$j \leftarrow j+1$}\\
	10. & \quad \quad \quad \textsf{$\dot{y}[j] \leftarrow  i \mod k $}\\
	11. & \quad \quad \quad \textsf{$\tau[\lceil i/k \rceil] \leftarrow j$}\\
	12. & \quad \textsf{return ($\dot{y}$,$j$,$\tau$)}\\
\end{tabular}

\end{tabular}
\end{center}
\caption{\label{code:preprocessing} (On the left) The pseudocode of procedure \textsc{Compute-Distance-Sampling} for the construction of the \emph{character distance sampling} version of a text $y$. (On the right) The pseudocode of procedure \textsc{Compute-Position-Sampling} for the construction of the \emph{character position sampling} version of a text $y$}
\end{figure}


\newpage

\subsection{The Preprocessing Phase}

Assuming that the maximum distance between two consecutive occurrences of the pivot character is bounded by $k$, by Definition \ref{def:pos} and by Definition \ref{def:dist}, the sequences $\dot{y}$ and $\bar{y}$ both require $(n_c)\log(k)$ bits to be maintained. In practical cases we can chose $k=256$ using a single byte to store each value of the sampled sequences and maintaining the assumption $\Delta(i)<k$ more feasible.  This will allow us to store the sampled text using only $n_c$ bytes. 

Fig. \ref{exp_dist} reports the maximum and average distances between two consecutive occurrences, computed for the most frequent characters in a natural language text. Observe that the first 6 most frequent characters follow the constraint on the maximum distance.

As a consequence the choice of the  pivot character directly influences the additional memory used for storing the sampled text (the larger is the rank of the pivot character the shorter is the resulting sampled text) and the performances of the searching phase, as we will see later in Section \ref{sec:results}.

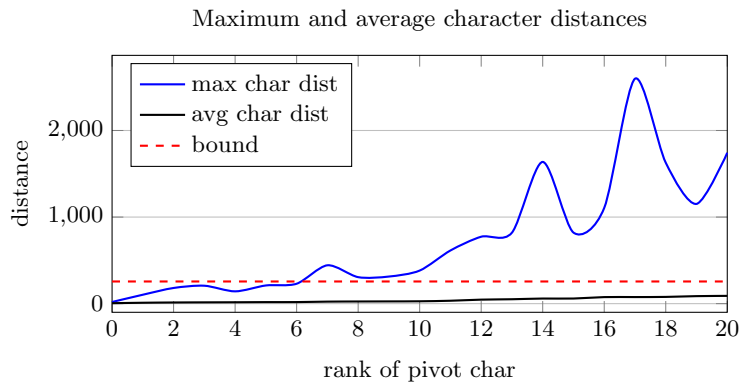
\begin{figure}
\begin{center}
\begin{tikzpicture}
	\begin{axis}[width={0.80\textwidth},xmin=0, xmax=20, ymin=-100, height={5cm}, ymajorgrids = true, title={Maximum and average character distances}, xlabel={rank of pivot char}, ylabel={distance}, legend cell align={left}, legend pos={north west}]
\addplot[thick,blue,smooth]  coordinates {
(0, 19) (1, 103) (2, 181) (3, 207) (4, 142) (5, 210) (6, 230) (7, 443) (8, 306) (9, 313) (10, 382) (11, 614) (12, 773) (13, 820) (14, 1637) (15, 821) (16, 1106) (17, 2598) (18, 1628) (19, 1152) (20, 1743) 
};
\addplot[thick,black,smooth]  coordinates {
(0, 5.25) (1, 10.34) (2, 13.41) (3, 14.90) (4, 15.93) (5, 17.65) (6, 17.98) (7, 23.31) (8, 24.92) (9, 25.81) (10, 27.20) (11, 33.46) (12, 46.21) (13, 50.83) (14, 58.87) (15, 59.04) (16, 75.50) (17, 75.87) (18, 77.79) (19, 86.74) (20, 89.75) 
};
\addplot[thick,red, domain=0:20,dashed] ({\x},{256});
\legend {max char dist, avg char dist, bound}
\end{axis}
\end{tikzpicture}\\
\end{center}
\caption{\label{exp_dist}Maximum and average distances between two consecutive occurrences, computed for the most frequent characters in a natural language text. On the $x$ axis characters are ordered on the base of their rank value, in a non decreasing order. The red line represents the bound $k=256$.}
\end{figure}

Let $y$ be an input text of length $n$ over the alphabet $\Sigma$, of size $\sigma$, and let $r$ be a constant input parameter. Specifically $r$ is the rank of the pivot character to be selected for building the sampled text. The rank $r$ must be chosen in order to have $\Delta(i)<k$, for each $1\leq i\leq n_c$.

The first step of the preprocessing phase of the algorithm consists in counting the frequencies of each character $c\in\Sigma$ and in computing their corresponding ranks. 

Subsequently the algorithm builds and store the $k$-bounded position sampled text $\dot{y}$. This step requires $O(r\log(n) + n)$-time (since only the first $r$ characters need to be ordered) and $O(n_c)$-space, where $n_c$ is the number of occurrences of the pivot character in $y$.

Each element of the mapping table $\tau$ can be stored using $\log(n)$ bit. Therefore the overall space complexity of the algorithm is $O(n_c+n\log(n)/k)$.


\begin{figure}
\begin{center}

\begin{tabular}{ll}
	\multicolumn{2}{l}{\textsc{Verify}($x$, $m$, $y$, $s$)}\\ 
	1. & \quad \textsf{$i \leftarrow 1$}\\
	2. & \quad \textsf{while $i \leq m$ and $x[i]=y[s+i]$ do $i \leftarrow i+1$}\\
	3. & \quad \textsf{if $i > m$ then return \textsc{True}}\\
	4. & \quad \textsf{return \textsc{False}}\\
\end{tabular}

\end{center}
\caption{\label{code:verification} The pseudocode of procedure \textsc{Verify} for testing the occurrence of a pattern $x$, of length $m$, in a text $y$, of length $n$, starting at position $s$.}
\end{figure}

\subsection{The Searching Phase}
Let $x$ be an input pattern of length $m$ and let $c \in \Sigma$ be the pivot character. Let $m_c$ be the number of occurrences of the pivot character in $x$.
The searching phase can be then divided in three different subroutines, depending on the value of $m_c$.
All searching procedures work using a filtering approach. They takes advantage of the partial-index precomputed during the preprocessing phase in order to quickly locate any candidate substring of the text which may include an occurrence of the pattern.

If such candidate substring has length $m$ the algorithm simply performs a character-by-character comparison between the pattern and the substring.  Figure \ref{code:verification} shows the pseudocode of procedure \textsc{Verify} used by the algorithm for testing the occurrence of a pattern $x$, of length $m$, in a text $y$, of length $n$, starting at position $s$. Its complexity is plainly $O(m)$ in the worst case.

Otherwise if the candidate substring has length greater than $m$, then a searching procedure is called, based on any exact online string matching algorithm. As we will discuss later, if we suppose that the underlying algorithm has a linear worst case time complexity, as in the case of the KMP algorithm \cite{Ref1}, then our solution achieves the same complexity in the worst case. Similarly if we suppose to implement the searching procedure using an optimal average string matching algorithm, like the BDM algorithm \cite{Ref4}, the resulting solution achieves an optimal $O(n\log_{\sigma}m/m)$ average time complexity.

In what follows we describe in details the three different searching procedures which are applied when $m_c=0$, $m_c=1$ and $m_c>1$, respectively.

\begin{figure}
\begin{center}
\begin{tabular}{ll}
	\multicolumn{2}{l}{\textsc{Search-0}($x$, $\dot{y}$, $y$)}\\ 
	$\Delta$ & \quad \textsf{We assume $c$ does not occur in $x$}\\	
	1. & \quad \textsf{$m \leftarrow len(x)$}\\
	2. & \quad \textsf{$n_c \leftarrow len(\dot{y})$}\\
	3. & \quad \textsf{$b \leftarrow 1$}\\
	4. & \quad \textsf{$\delta_0 \leftarrow 0$}\\
	5. & \quad \textsf{for $i \leq 2$ to $n_c$ do}\\
	6. & \quad \qquad \textsf{$(\delta_i,b) \leftarrow \textsc{Get-Position}(\tau, b, \dot{y}, i)$}\\
	7. & \quad \qquad \textsf{if ($\delta_i - \delta_{i-1}> m$) then}\\
	8. & \quad \qquad \qquad \textsf{search for $x$ in $y[\delta_{i-1}+1 .. \delta_{i}-1]$}\\
	9. & \quad \textsf{if ($n+1 - \delta_{n_c}> m$) then}\\
	10. & \quad \qquad \textsf{search for $x$ in $y[\delta_{n_c}+1 .. n]$}\\
\end{tabular}
\end{center}
\caption{\label{code:search0} The pseudocode of procedure \textsc{Search-0} for the sampled string matching problem, when the pivot character does not occur in the input pattern $x$.}
\end{figure}

\subsubsection{Case 1: $m_c=0$}
Consider firstly  the case where the pattern does not contain any occurrence of the pivot character $c$, so that  $m_c$  is equal to zero. Under this assumption the algorithm searches for the pattern $x$ in all substrings of the original text which do not contain the pivot character. Specifically such substrings are identified in the original text by the intervals $[\delta(i)+1 .. \delta(i+1)-1]$, for each $0\leq i \leq n_c$, assuming $\delta(0)=0$ and $\delta(n_c+1)=n+1$.

Figure \ref{code:search0} shows the pseudocode of the algorithm which searches for all occurrences of a pattern $x$, when the pivot character $c$ does not occur in it.  Specifically, for each $1 \leq i \leq n_c+1$, the algorithm checks if the interval $\delta(i)-\delta(i-1)$ is greater than $m$ (lines 5-7). In such a case the algorithm searches for $x$ in the substring of the text $y[\delta(i-1)+1..\delta(i)-1]$ (line 8) using any standard string matching algorithm. Otherwise the substring is skipped, since no occurrence of the pattern could be found at such position.

The following theorem proves that procedure \textsc{Search-0} achieves optimal worst-case and average-case time complexity.

\begin{theorem}\label{th:search0}
Le $x$ and $y$ be two strings of size $m$ and $n$, respectively, over an alphabet $\Sigma$ of size $\sigma>1$. Let $c \in \Sigma$ be the pivot character and let $\dot{y}$ be the $k$-bounded position sampled version of the text $y$. Under the assumption of equiprobability and independence of characters in $\Sigma$, the worst-case and average time complexity of the \textsc{Search-0} algorithm are $O(n)$ and $O(n\log_{\sigma}m/m)$, respectively.
\end{theorem}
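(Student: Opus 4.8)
The plan is to analyze procedure \textsc{Search-0} by separating its total running time into two components: the cost of scanning the sampled structure and the cost of the actual pattern searches invoked on text substrings. The key observation driving the whole argument is that the algorithm only invokes a string matching call on an interval $[\delta(i-1)+1 .. \delta(i)-1]$ when its length $\delta(i)-\delta(i-1)$ exceeds $m$; all shorter gaps between consecutive pivot occurrences are skipped outright, since an occurrence of $x$ (which contains no pivot character, as $m_c=0$) must lie entirely within such a gap.

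For the worst case, I would first bound the scanning cost. The main loop runs $n_c$ times, each iteration calling \textsc{Get-Position}; by the monotonicity of $\tau$ established in equation (\ref{eq:order}), the variable $b$ advances monotonically across all calls, so the amortized cost of the scan is $O(n_c + n/k)$, which is $O(n)$. For the searching cost, I would observe that the intervals searched are pairwise disjoint and together form a subset of $[1..n]$, so their total length is at most $n$. If the underlying algorithm (e.g.\ KMP) is linear, the combined cost of all searches is $O(n)$. Summing the two components yields the worst-case bound $O(n)$.

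For the average case, the plan is to show that the expected total work is $O(n\log_\sigma m / m)$ under the equiprobability-and-independence model. Here I would use that an optimal average-time algorithm such as BDM searches a text fragment of length $\ell$ in expected time $O(\ell \log_\sigma m / m)$. By linearity of expectation over the disjoint searched intervals, the expected searching cost is $O(n \log_\sigma m / m)$, matching the optimal bound. The more delicate point is that the scanning cost $O(n_c + n/k)$ must not dominate this optimal term. Since $\expect(n_c) = n/\sigma$ and, under the stated assumption $k\geq\sigma$, the term $n/k$ is also $O(n/\sigma)$, the expected scanning cost is $O(n/\sigma)$; one must verify $n/\sigma = O(n\log_\sigma m / m)$ for the relevant parameter range, i.e.\ that $m/\log_\sigma m = O(\sigma)$.

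The main obstacle I anticipate is precisely this last inequality: the claim that the $O(n/\sigma)$ scanning overhead is absorbed into the optimal $O(n\log_\sigma m/m)$ term is \emph{not} unconditionally true---for very long patterns $m$ can make $n\log_\sigma m/m$ arbitrarily small while $n/\sigma$ stays fixed. This is consistent with the paper's own caveat that optimality is claimed only ``for patterns with a length of (at most) few hundreds of characters,'' so the honest version of the proof must either restrict the pattern-length regime or more carefully charge the mandatory scan of $\dot{y}$ against the work the underlying algorithm would perform anyway. I would therefore frame the average-case step under the implicit assumption that $m$ is bounded relative to $\sigma$ so that $n/\sigma = O(n\log_\sigma m/m)$, and flag that the filtering scan is the sole source of the restriction.
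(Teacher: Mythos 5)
Your proposal is correct and takes essentially the same route as the paper's proof: the same decomposition into the scan of $\dot{y}$ and $\tau$ (cost $O(n_c + n/k)$, with $b$ advancing monotonically) plus per-gap searches over disjoint intervals of total length at most $n$, with the sub-linear terms absorbed into $O(n\log_\sigma m/m)$ only under a parameter restriction, which the paper states as $k \geq m/\log_\sigma m$ and $k \geq \sigma$. If anything you are slightly more careful than the paper, whose displayed average-case bound silently omits the mandatory $O(n_c)$ loop cost; accounting for it, as you do via $\mathbf{E}(n_c)=n/\sigma$, is exactly what forces the extra condition $m/\log_\sigma m = O(\sigma)$, i.e.\ the ``patterns of at most a few hundred characters'' regime the paper acknowledges only informally.
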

\begin{proof}
In order to evaluate the worst-case time complexity of procedure \textsc{Search-0}, we can notice that each substring of the text is scanned at least once in line 10, with no overlap. Thus if we use a linear algorithm to perform the standard search then it is trivial to prove that the whole searching procedure requires
$$
	T^0_{\textsf{wst}}(n) = O(m) + \sum_{i=1}^{n_c-1} O\left(\Delta(i)\right) + O\left(\frac{n}{k}\right) = O(n).
$$
Assuming that the underlying algorithm has an $O(n\log m/m))$ average time complexity, on a text of length $n$ and a pattern of length $m$, we can express the expected average time complexity as
$$
	T^0_{\textsf{avg}}(n) = \sum_{i=1}^{n_c-1} O\left(\frac{\Delta(i)\log_{\sigma} m}{m}\right) + O\left(\frac{n}{k}\right) = O\left(\frac{n\log_{\sigma} m}{m}\right).
$$
for enough great values $k\geq m/\log_{\sigma} m$ and $k\geq \sigma$.\QEDA
\end{proof}
Observe that, in the case of a natural language text, where generally $\sigma\geq 100$, a block size $k=256$ is enough to reach the optimal average time complexity for any (short enough) pattern of length $m<256$.

\subsubsection{Case 2: $m_c=1$}
Suppose now the case where the pattern $x$ contains a single occurrence of the pivot character $c$, so that the length of the sampled version of the pattern is still equal to zero. The algorithm efficiently takes advantage of the information precomputed in $\dot{y}$ using the positions of the pivot character $c$ in $y$ as an anchor to locate all candidate occurrences of $x$.

Figure \ref{code:search1} shows the pseudocode of the algorithm which searches for all occurrences of a pattern $x$, when the pivot character $c$ occurs only once in it.
Specifically, let $\alpha$ be the unique position in $x$ which contains the pivot character (line 3), i.e. we assume that $x[\alpha]=c$ and $x[1..\alpha-1]$ does not contain $c$. Then, for each $0\leq i \leq n_c-1$, the algorithm checks if the interval $\delta(i-1)-\delta(i-2)$ is greater than $\alpha-1$ and if the interval $\delta(i)-\delta(i-1)$ is greater than $m-\alpha$ (lines 7-9). In such a case the algorithm merely checks if the substring of the text $y[\delta(i)-\alpha+1..\delta(i)+m-\alpha]$ is equal to the pattern (line 10). Otherwise the substring is skipped. As before we assume that $\delta(0) = 0$ (line 5) and $\delta(n_c+1)= n+1$ (line 11). 
The last alignment of the pattern in the text is verified separately at the end of the main cycle (lines 11-12). 

\begin{figure}
\begin{center}
\begin{tabular}{ll}
	\multicolumn{2}{l}{\textsc{Search-1}($x$, $\dot{y}$, $y$)}\\ 
	$\Delta$ & \quad \textsf{We assume $c$ occurs once in $x$}\\	
	1. & \quad \textsf{$m \leftarrow len(x)$}\\
	2. & \quad \textsf{$n_c \leftarrow len(\dot{y})$}\\
	3. & \quad \textsf{$\alpha \leftarrow \min\{i:x[i]=c\}$}\\
	4. & \quad \textsf{$b \leftarrow 1$}\\
	5. & \quad \textsf{$\delta_0\leftarrow 0$}\\
	6. & \quad \textsf{$(\delta_1,b) \leftarrow \textsc{Get-Position}(\tau, b, \dot{y}, 1)$}\\
	7. & \quad \textsf{for $i \leq 2$ to $n_c$ do}\\
	8. & \quad \qquad \textsf{$(\delta_i,b) \leftarrow \textsc{Get-Position}(\tau, b, \dot{y}, i)$}\\
	9. & \quad \qquad \textsf{if ($\delta_{i-1} - \delta_{i-2}> \alpha-1$ and $\delta_{i} - \delta_{i-1}> m-\alpha$) then}\\
	10. & \quad \qquad \qquad \textsf{\textsc{Verify}($x$, $m$, $y$, $p_{i-1}-\alpha+1$)}\\
	11. & \quad \textsf{if ($\delta_{n_c} - \delta_{n_c-1}> \alpha-1$ and $n+1 - \delta_{n_c}> m-\alpha$) then}\\
	12. & \quad \qquad \textsf{\textsc{Verify}($x$, $m$, $y$, $\delta_{n_c}-\alpha+1$)}\\
\end{tabular}
\end{center}
\caption{\label{code:search1} The pseudocode of procedure \textsc{Search-1} for the sampled string matching problem, when the pivot character occurs ones in the input pattern $x$.}
\end{figure}

The following theorem proves that procedure \textsc{Search-1} achieves optimal worst-case and average-case time complexity.

\begin{theorem}\label{th:search1}
Le $x$ and $y$ be two strings of size $m$ and $n$, respectively, over an alphabet $\Sigma$ of size $\sigma>1$. Let $c \in \Sigma$ be the pivot character and let $\dot{y}$ be the $k$-bounded position sampled version of the text $y$. Under the assumption of equiprobability and independence of characters in $\Sigma$, the worst-case and average time complexity of the \textsc{Search-1} algorithm are $O(n)$ and $O(n\log_{\sigma}m/m)$, respectively.
\end{theorem}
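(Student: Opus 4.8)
The plan is to follow the template of the proof of Theorem~\ref{th:search0}, decomposing the running time of \textsc{Search-1} into the cost of traversing the sampled data and the cost of the \textsc{Verify} calls, and to bound each separately. First I would observe that the main loop performs exactly $n_c-1$ iterations, and that across the whole loop the index $b$ passed to \textsc{Get-Position} is monotonically non-decreasing; hence the total work spent in all \textsc{Get-Position} calls and in the $O(1)$ filtering tests amortizes to $O(n_c + n/k)$, exactly as for \textsc{Search-0}. What remains is to account for the time spent inside the \textsc{Verify} routine, called at most once per iteration.

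For the worst case, each single call to \textsc{Verify} costs $O(m)$, so the crux is to bound the number of anchors that pass the filter on lines~9 and~11. This is the step I expect to be the main obstacle: unlike \textsc{Search-0}, where the searched intervals are pairwise disjoint and the cost telescopes immediately, the verification windows of consecutive anchors overlap, so a naive no-overlap argument fails. Instead I would charge the cost to the gaps. A verification at the anchor of index $j$ is triggered only when both $\delta(j)-\delta(j-1) > \alpha - 1$ and $\delta(j+1)-\delta(j) > m-\alpha$, i.e. when the preceding gap $\Delta(j-1)$ exceeds $\alpha-1$ and the following gap $\Delta(j)$ exceeds $m-\alpha$. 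Distinct triggered anchors use distinct following gaps (and distinct preceding gaps), these gaps are positive and sum to at most $n$, and $\max(m-\alpha,\alpha-1)\ge (m-1)/2$; applying the counting argument to whichever of the two gap families carries the larger threshold shows that the number of triggered anchors is $O\!\left(n/\max(m-\alpha,\alpha-1)\right)=O(n/m)$. Multiplying by the $O(m)$ cost of each \textsc{Verify} yields $O(n)$ total verification time, and combined with the $O(n_c + n/k)=O(n)$ traversal cost this gives $T^1_{\textsf{wst}}(n)=O(n)$.

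For the average case, under equiprobability and independence a single \textsc{Verify} call halts at the first mismatch after an expected $\sum_{t\ge 0}\sigma^{-t}=O(1)$ character comparisons, so every verification costs $O(1)$ in expectation. I would then mirror the average-case estimate of Theorem~\ref{th:search0}, charging the expected work of each iteration to the corresponding gap $\Delta(i)$ and writing
$$
T^1_{\textsf{avg}}(n) = \sum_{i=1}^{n_c-1} O\!\left(\frac{\Delta(i)\log_\sigma m}{m}\right) + O\!\left(\frac{n}{k}\right) = O\!\left(\frac{n\log_\sigma m}{m}\right),
$$
valid for $k\ge m/\log_\sigma m$ and $k\ge\sigma$, exactly the regime used for \textsc{Search-0}.

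The subtle point to argue carefully is that the pure traversal of $\dot{y}$ costs $\Theta(n_c)=\Theta(n/\sigma)$ on average regardless of the verifications; reconciling this term with the target bound $O(n\log_\sigma m/m)$ is precisely what forces the side conditions $k\ge\sigma$ and $k\ge m/\log_\sigma m$ (so that $n/\sigma$ and $n/k$ are both absorbed), and it explains why the optimal average complexity is attained only for patterns short relative to $\sigma\log_\sigma m$, as anticipated in the introduction. I would finish by checking the boundary cases separately, namely the initialization with $\delta_0=0$ and the final alignment handled on lines~11--12, each of which contributes only one additional verification, costing $O(m)$ in the worst case and $O(1)$ in expectation.
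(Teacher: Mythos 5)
Your proof is correct, and your worst-case argument takes a genuinely different route from the paper's. The paper bounds the \emph{total cost} of all verifications by a locality argument: the filter conditions $\Delta(i-1)>\alpha-1$ and $\Delta(i)>m-\alpha$ confine the verification window $y[\delta(i)-\alpha+1\,..\,\delta(i)+m-\alpha]$ strictly between the two neighbouring anchors, so any text position in a gap $[\delta(i-1)+1\,..\,\delta(i)-1]$ can be inspected by at most the two windows anchored at $\delta(i-1)$ and at $\delta(i)$, giving total verification work at most $2n$. You instead bound the \emph{number} of triggered anchors: since the two thresholds sum to $m-1$, one of them is at least $(m-1)/2$, distinct anchors consume distinct gaps, and the gaps sum to at most $n$, so only $O(n/m)$ anchors pass the filter, each costing $O(m)$. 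Your count is in fact slightly stronger information than the paper's (it bounds the verification count, not just the aggregate cost), and both yield $T^1_{\textsf{wst}}(n)=O(n)$; your amortized $O(n_c+n/k)$ traversal bound matches the paper's implicit accounting.

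On the average case your final accounting coincides with the paper's --- $\mathbf{E}(n_c)=n/\sigma$ anchors, each \textsc{Verify} costing $O(\sigma/(\sigma-1))=O(1)$ expected, hence $O(n/(\sigma-1))$ overall, optimal only for $\sigma>(m/\log_\sigma m)+1$ with $k\geq\sigma$ --- but the displayed sum $\sum_i O(\Delta(i)\log_\sigma m/m)+O(n/k)$ is an unjustified carry-over from \textsc{Search-0}. In \textsc{Search-1} there is no per-gap substring search: each iteration costs $\Theta(1)$ expected regardless of $\Delta(i)$, and $\Delta(i)\log_\sigma m/m$ can be far below a constant, so the per-gap charge fails; relatedly, the side conditions you attach there ($k\geq\sigma$ and $k\geq m/\log_\sigma m$) constrain only $k$ and cannot absorb the $\Theta(n/\sigma)$ traversal term --- the binding constraint is on $\sigma$, exactly as the paper states and as you yourself recover in your closing paragraph (``patterns short relative to $\sigma\log_\sigma m$''). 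Deleting that display and keeping the closing paragraph's accounting makes the argument clean and equivalent to the paper's.
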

\begin{proof}
In order to evaluate the worst-case time complexity of the algorithm notice that each character could be involved in, at most, two consecutive checks in line 10.  Specifically any text position in the interval $[\delta(i-1)+1..\delta(i)-1]$ could be involved in the verification of the substrings $y[\delta(i-1)-\alpha+1 .. \delta(i-1)+m-\alpha]$ and $y[\delta(i)-\alpha+1 .. \delta(i)+m-\alpha]$. 
Thus the overall worst case time complexity of the searching phase is 
$T^1_{\textsf{wst}}(n) = O(n)$.

In order to evaluate the average-case time complexity of the algorithm notice that the expected number of occurrences in $y$ of the pivot character is given by $\mathbf{E}(n_c) = n/\sigma$. Moreover, for any candidate occurrence of $x$ in $y$, the number $\mathbf{E}(\textrm{insp})$ of expected character inspections performed by procedure \textsc{Verify}, when called on a pattern of length $m$, is given by 
$$
\mathbf{E}(\textrm{insp}) = 1+ \sum_{i=1}^{m-1}\left(\frac{1}{\sigma}\right)^i \leq \frac{\sigma}{\sigma-1}
$$
Thus the average time complexity of the algorithm ican be expressed by
$$
\begin{array}{ll}
	T^1_{\textsf{avg}}(n) = & \mathbf{E}(n_c) \cdot \mathbf{E}(\textrm{insp}) =\\
				   & O\left(\frac{n}{\sigma}\right) \cdot O\left(\frac{\sigma}{\sigma-1}\right) = \\
				   & O\left( \frac{n}{\sigma-1} \right)\\
\end{array}
$$
obtaining the optimal average time complexity $O(n\log_{\sigma}m/m))$ for great enough alphabets of size $\sigma>(m/\log_{\sigma} m)+1$, and for $k\geq \sigma$.\QEDA
\end{proof}

\subsubsection{Case 3: $m_c\geq 2$}
If the number of occurrences of the pivot character in $x$ are  more than $2$ (i.e. if $m_c\geq 2$ and $\bar{m}\geq 1$) then the algorithm uses the sampled text $\bar{y}$ to compute on the fly the character-distance sampled version of $y$ and use it to searching for any occurrence of $\bar{x}$. This is used as a filtering phase for locating in $y$ any candidate occurrence of $x$.

Figure \ref{code:search2} shows the pseudocode of the algorithm which searches for all occurrences of a pattern $x$, when the pivot character $c$ occurs more than once in it.
First the character distance sampled version of the pattern $\bar{x}$ is computed (line 5). Then the algorithm searches for $\bar{x}$ in the $\bar{y}$ using any exact online string matching algorithm (line 6). Notice that $\bar{y}$ can be efficiently retrieved online from the sampled text $\dot{y}$, using relation given in (\ref{eq:2}).

For each occurrence position $j$ of $\bar{x}$ in $\bar{y}$ an additional procedure must be run  to check if such occurrence corresponds to a match of the whole pattern $x$ in $y$ (lines 7-9). For this purpose the algorithm checks if the substring of the text $y[\delta(j)-\alpha ... \delta(j)-\alpha+m-1]$ is equal to $x$, where $\alpha$, as before, is the first position in $x$ where the pivot character occurs.
Notice that the value of $\delta(j)$ can be obtained in constant time from $\dot{y}[j]$ 
(line 8).

\begin{figure}
\begin{center}
\begin{tabular}{ll}
	\multicolumn{2}{l}{\textsc{Search-2$^+$}($x$, $\dot{y}$, $y$)}\\ 
	$\Delta$ & \quad \textsf{We assume $c$ occurs more than once in $x$}\\	
	1. & \quad \textsf{$m \leftarrow len(x)$}\\
	2. & \quad \textsf{$\dot{n} \leftarrow len(\dot{y})$}\\
	3. & \quad \textsf{$\alpha \leftarrow \min\{i:x[i]=c\}$}\\
	4. & \quad \textsf{$b \leftarrow 1$}\\
	5. & \quad \textsf{$(\bar{x}, \bar{m}) \leftarrow \textsc{Compute-Distance-Sampling}(x,m,\{c\})$}\\
	6. & \quad \textsf{search for $\bar{x}$ in $\bar{y}$ :}\\
	7. & \quad \qquad \textsf{for each $i$ such that $\bar{x} = \bar{y}[i .. i+\bar{m}-1]$ do}\\
	8. & \quad \qquad \qquad \textsf{$(\delta_i,b) \leftarrow \textsc{Get-Position}(\tau, b, i)$}\\
	9. & \quad \qquad \qquad \textsf{\textsc{Verify}($x$, $m$, $y$, $\delta_{i}-\alpha$)}\\
\end{tabular}
\end{center}
\caption{\label{code:search2} The pseudocode of procedure \textsc{Search-2} for the sampled string matching problem, when the pivot character occurs more than ones in the input pattern $x$.}
\end{figure}

The following theorem proves that procedure \textsc{Search-2} achieves optimal worst-case and average-case time complexity.

\begin{theorem}\label{th:search2}
Le $x$ and $y$ be two strings of size $m$ and $n$, respectively, over an alphabet $\Sigma$ of size $\sigma>1$. Let $c \in \Sigma$ be the pivot character and let $\dot{y}$ be the $k$-bounded position sampled version of the text $y$. Under the assumption of equiprobability and independence of characters in $\Sigma$, the worst-case and average time complexity of the \textsc{Search-2} algorithm are $O(n)$ and $O(n\log_{\sigma}m/m)$, respectively.
\end{theorem}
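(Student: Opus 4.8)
The plan is to decompose the running time of \textsc{Search-2} into three contributions and bound each one, reusing the machinery already developed for Theorems~\ref{th:search0} and~\ref{th:search1}. The three contributions are: (a) the construction of the distance-sampled pattern $\bar{x}$ at line~5; (b) the on-the-fly reconstruction of $\bar{y}$ together with the filtering search of $\bar{x}$ in $\bar{y}$ at lines~6--7; and (c) the verification of every candidate reported by the filter at lines~8--9. I would treat the worst case and the average case separately, since they share phases (a) and (b) but differ sharply in how (c) is charged.

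For the worst case, phase (a) costs $O(m)$ by the analysis of \textsc{Compute-Distance-Sampling}. For phase (b), relation~(\ref{eq:2}) of Lemma~\ref{lemma1} lets me produce each entry $\bar{y}[i]$ in amortized constant time: the block index $b$ passed to \textsc{Get-Position} (justified by Corollary~\ref{corol:position}) never decreases, so across the whole scan it advances $O(n_c + n/k)$ steps, and running a linear-time matcher (for instance KMP) over the length-$(n_c-1)$ sequence $\bar{y}$ adds a further $O(n_c)$. Under the standing assumptions $k\geq\sigma$ and $n_c\leq n$ this phase is $O(n)$. It then remains to charge phase (c): each \textsc{Verify} call costs $O(m)$ in the worst case, and the candidates are anchored at the pivot positions $\delta(i)$ recovered in line~8, so the natural approach is to show, as in the proof of Theorem~\ref{th:search1}, that every text position lies in only a bounded number of verification windows $y[\delta(i)-\alpha .. \delta(i)-\alpha+m-1]$, so that the total verification work telescopes to $O(n)$.

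For the average case I would combine two estimates in the spirit of Theorem~\ref{th:search1}. First, the expected number of anchors is $\mathbf{E}(n_c)=n/\sigma$, and for each candidate the expected number of character inspections performed by \textsc{Verify} is at most $\sigma/(\sigma-1)$ under equiprobability and independence. Second, and this is the new ingredient, I must estimate the expected number of positions of $\bar{y}$ at which the length-$\bar{m}$ filter $\bar{x}$ matches: a match requires $\bar{m}=m_c-1$ consecutive inter-pivot distances to coincide with those of the pattern, and since each gap is geometric with parameter $1/\sigma$, this probability should decay geometrically in $\bar{m}$. Multiplying the expected number of surviving candidates by the $O(\sigma/(\sigma-1))$ per-candidate cost, adding the $O(n_c)$ filtering cost (improvable with a BDM-style matcher over $\bar{y}$), and imposing the same side conditions $\sigma>(m/\log_\sigma m)+1$ and $k\geq\sigma$ used earlier, I expect the total to collapse to the optimal $O(n\log_\sigma m/m)$.

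The main obstacle is phase~(c) in the worst case. Unlike the single-pivot setting of Theorem~\ref{th:search1}, where an explicit guard condition separates the verification windows, here the guard is replaced by a full match of the distance pattern, and on a periodic input (for instance $x=$``abab$\cdots$ab'' against a periodic $y$) the filter can admit $\Theta(n)$ candidates whose windows all overlap, so a naive charge gives $\Theta(nm)$. Making the bounded-overlap argument rigorous---either by exploiting that $m_c\geq 2$ forces each surviving candidate to consume at least two distinct pivots and hence a genuine fraction of its window, or by folding the bound into the probabilistic model as the hypothesis of the statement suggests---is the delicate point; the companion average-case estimate of the filter's false-positive rate under the induced distribution on $\bar{y}$ is the second place where care is required.
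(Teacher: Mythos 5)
Your decomposition and your average-case analysis are essentially the paper's: the paper likewise charges the filtering search over $\bar{y}$ at $O(n_c\log m_c/m_c + n/k)$, bounds the number of verifications simply by $n_c$ (expected $n/\sigma$), multiplies by the $O(\sigma/(\sigma-1))$ expected inspections per call to \textsc{Verify} from Theorem~\ref{th:search1}, and closes under the side conditions $k\geq\sigma\geq(m/\log_\sigma m)+1$. Note that your ``new ingredient''---estimating the geometric decay of the filter's false-positive rate in $\bar{m}$---is not needed for this bound: the crude estimate that every pivot occurrence triggers at most one verification already yields $O(n/(\sigma-1))$, which is what the paper uses, so you may simply drop that step (it is the delicate part of your average-case plan, and it does no work).

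The genuine gap is exactly where you flagged it: the worst-case cost of phase (c). You correctly observe that on periodic inputs the filter admits $\Theta(n_c)$ candidates with heavily overlapping windows, so that charging $O(m)$ per candidate gives $\Theta(n_c m)$, and the bounded-overlap argument that worked in Theorem~\ref{th:search1} (where the guard conditions on $\delta(i)-\delta(i-1)$ force windows apart) is simply unavailable here. But neither of your proposed escapes succeeds. The observation that $m_c\geq 2$ forces each candidate to consume two pivots does not bound overlap: in the example $x=(\textrm{ac})^{m/2}$, $y=(\textrm{ac})^{n/2}$ with pivot ``a'', every position of $\bar{y}$ is a candidate, consecutive windows share all but $O(1)$ of their pivots, and the total naive cost is still $\Theta(nm)$. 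And folding the difficulty into the probabilistic model surrenders the claim, since the $O(n)$ bound in the statement is a deterministic worst-case bound. The paper's fix is a different idea, which your proposal never reaches: make the \emph{verification itself} stateful across candidates, i.e., implement \textsc{Verify} on top of a linear matcher such as KMP that remembers which text positions have already been inspected, so that no character of $y$ is re-examined when successive verification windows overlap. Then the total work of phase (c), summed over \emph{all} candidates, is $O(n)$ by amortization, independently of $n_x$, and the worst-case bound $T^{^{2+}}_{\textsf{wst}}(n)=O(n)$ follows together with your (correct) $O(n_c+n/k)$ accounting for phases (a) and (b). Without this memoization device, or some equivalent, the worst-case half of the theorem remains unproved in your write-up.
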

\begin{proof}
In order to evaluate the worst-case time complexity of the algorithm in this last case notice that, if we use a linear algorithm to search $\bar{y}$ for $\bar{x}$, the overall time complexity of the searching phase is $O(n_c + n_x m + n/k)$, where $n_x$ is the number of occurrences of $\bar{x}$ in $\bar{y}$. In the worst case it translates in $O(n_c m)$ worst case time complexity. However it is not  difficult to suppose to implement procedure \textsc{Verify} based on a linear algorithm, as KMP, in order to remember all positions of the text which have been already verified, allowing the algorithm to run in overall $T^{^{2+}}_{\textsf{wst}}(n)=O(n)$ worst-case time complexity.

In order to evaluate the average-case time complexity of the algorithm notice that time required for searching $\bar{x}$ in $\bar{y}$ is $O(n_c\log m_c/m_c) + n/k)$. Moreover, observe that the number of verification is bounded by the expected number of occurrences of the pivot character in $y$, thus, following the same line of Theorem \ref{th:search1}, the overall average time complexity of the verifications phase is $O(n/(\sigma-1))$.
Thus the average time complexity of the algorithm can be expressed by
$$
	T^{^{2+}}_{\textsf{avg}}(n) = O\left(\frac{n_c\log m_c}{m_c}\right) + O\left( \frac{n}{k}\right) + O\left( \frac{n}{\sigma-1} \right)
$$
obtaining the optimal average time complexity $O(n\log_{\sigma}m/m)$ for great enough values of $k$ and $\sigma$, such that $k\geq \sigma \geq (m/\log_{\sigma} m)+1$.\QEDA
\end{proof}


\section{Experimental Results} \label{sec:results}
In this section we report the results of an extensive evaluation of the new presented algorithm in comparison with the \textsc{Ots} algorithm by Claude \emph{et al.} \cite{Ref37} for the sampled string matching problem.
The algorithms have been implemented in \textsf{C}, and have been tested using a variant of the \textsc{Smart} tool~\cite{FTBDM16}, properly tuned for testing string matching algorithms based on a text-sampling approach, and executed locally on a MacBook Pro with 4 Cores, a 2 GHz Intel Core i7 processor, 16 GB RAM 1600 MHz DDR3, 256 KB of L2 Cache and 6 MB of Cache L3.\footnote{The \textsc{Smart} tool is available online for download at  \url{http://www.dmi.unict.it/~faro/smart/} or at  \url{https://github.com/smart-tool/smart}.}
Although all the evaluated algorithms could be implemented in conjunction with any online string matching algorithm, for the underlying searching procedure, for a fair comparison in our tests we realise implementations in conjunction with the Horspool algorithm, following the same line of the experimental evaluation conducted in \cite{Ref37}.

Comparisons have been performed in terms of preprocessing times, searching times and space consumption.
For our tests, we used the English text of size 5MB provided by the research tool \textsc{Smart}, available online for download.\footnote{Specifically, the text buffer is the concatenation of two different texts: The King James version of the bible (3.9 MB) and The CIA world fact book (2.4 MB). The first 5MB of the resulting text buffer have been used in our experimental results.}  

In the experimental evaluation, patterns of length $m$ were randomly extracted from the text (thus the number of reported occurrences is always greater than $0$), with $m$ ranging over the set of values $\{2^i \mid 2\leq i \leq 8\}$.
In all cases, the mean over the running times (expressed in milliseconds) of $1000$ runs has been reported.

\subsection{Space requirements}
In the context of text-sampling string-matching space requirement is one of the most significant parameter to take into account. It indicates how much additional space, with regard to the size of the text, is required by a given solution to solve the problem.

Conversely to other kind of algorithms, like compressed-matching algorithms, which are allowed to maintain the input strings in a compressed form (with a detriment in processing time), algorithms in text-sampling matching require to store the whole text together with the additional sampled-text which is used to speed-up the searching phase. In this context they are much more similar to online string-matching solutions and, although they have the additional good property to allow a direct access to the input text, to be of any practical interest they should require as little extra space as possible.

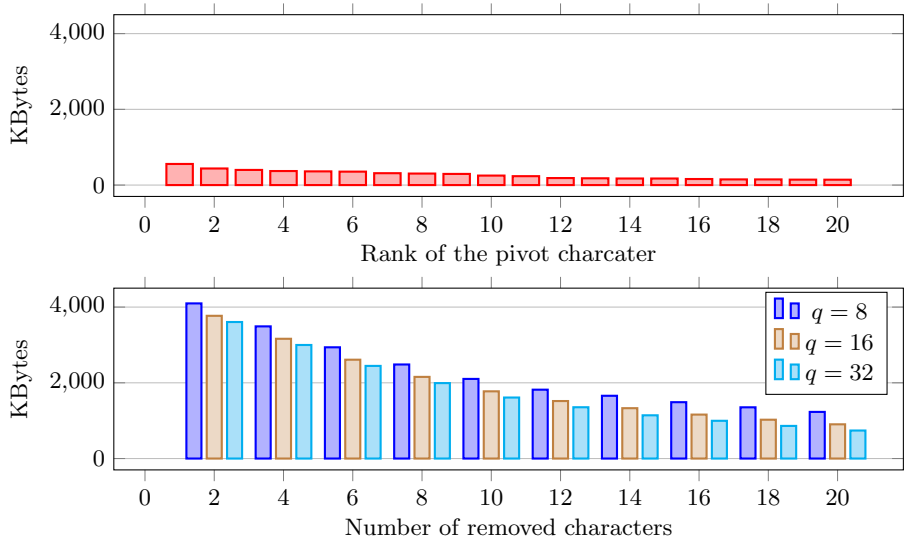
\begin{figure}[!t]
\begin{center}
\begin{tikzpicture}
	\begin{axis}[
		ybar,
		width={0.99\textwidth},
		height={4cm},
		ymax=4500,
		ymin=-300,
		xlabel=Rank of the pivot charcater,
		ylabel=KBytes,
		ymajorgrids = true,
		]
	\addplot[thick, color=red,fill=red!30] coordinates {
(1, 557.45)
(2, 438.11)
(3, 399.89)
(4, 370.58)
(5, 361.65)
(6, 354.82)
(7, 313.59)
(8, 304.11)
(9, 294.52)
(10, 250.63)
(11, 234.97)
(12, 185.23)
(13, 179.27)
(14, 173.38)
(15, 173.23)
(16, 159.83)
(17, 149.67)
(18, 149.42)
(19, 143.44)
(20, 141.98)
};
	\end{axis}
\end{tikzpicture}\\
\begin{tikzpicture}
	\begin{axis}[
		ybar,
		width={0.99\textwidth},
		height={4cm},
		xlabel=Number of removed characters,
		xmin=1,
		ylabel=KBytes,
		ymax=4500,
		ymin=-300,
		ymajorgrids = true,
		bar width=0.2cm,
		enlarge x limits=true,
		]
	\addplot[thick, color=blue,fill=blue!30] coordinates {
(	2  	,	4097.08	)
(	4  	,	3490.46	)
(	6  	,	2937.84	)
(	8  	,	2483.99	)
(	10  	,	2102.68	)
(	12  	,	1818.64	)
(	14  	,	1657.52	)
(	16  	,	1488.31	)
(	18  	,	1353.07	)
(	20  	,	1231.50	)
};
	\addplot[thick, color=brown,fill=brown!30] coordinates {
(	2  	,	3769.40	)
(	4  	,	3162.78	)
(	6  	,	2610.16	)
(	8  	,	2156.30	)
(	10  	,	1775.00	)
(	12  	,	1518.64	)
(	14  	,	1329.84	)
(	16  	,	1160.63	)
(	18  	,	1025.39	)
(	20  	,	903.82	)
};
	\addplot[thick, color=cyan,fill=cyan!30] coordinates {
(	2  	,	3605.56	)
(	4  	,	2998.94	)
(	6  	,	2446.32	)
(	8  	,	1992.46	)
(	10  	,	1611.16	)
(	12  	,	1354.81	)
(	14  	,	1141.98	)
(	16  	,	996.79	)
(	18  	,	861.55	)
(	20  	,	739.98	)
};
\legend{$q=8$,$q=16$,$q=32$}
	\end{axis}
\end{tikzpicture}\\
\end{center}
\caption{\label{fig:space}Space consumption of the text sampled algorithms. All values are in KB. On the top: memory space required by the new algorithm, for different pivot characters with rank ranging from $1$ to $20$.
On the bottom: memory space required by the \textsc{Ots} algorithm, for different sets of removed characters, with size ranging from $2$ to $20$, and for different values of the parameter $q$.}
\end{figure}

Table \ref{fig:space} shows the space consumption of the text-sampling algorithms described in this paper. Specifically, memory space required by the new algorithm is plotted on variations of the pivot character, while memory space required by the \textsc{Ots} algorithm is plotted on variations of the size of the set of removed characters. Both values range from $2$ to $20$.

Space consumption of the \textsc{Ots} algorithm ranges from $4$ MB (80\% of text size), when only 2 characters are removed and $q=8$, to  $740$ KB ($14.8$\% of text size), in the case of 20 removed characters and $q=32$. As expected, the function which describes memory requirements follows a decreasing trend while the value of $\sigma^*$ increases. 

A similar trend can be observed for our new algorithm, whose space consumption decreases while the rank of the pivot character increases. However, in this case, space requirement ranges from $557$ KB ($11$\% of text size), if we select the most frequent character as a pivot, to $142$ KB ($2.8$\% of text size), when the pivot is the character with rank position $20$. 
Thus the benefit in space consumption obtained by the new algorithm ranges (at least) from  $24$\% to $80$\%, even with the best performances of the  \textsc{Ots} algorithm.

Observe also that in the case of short texts (with a size of few Mega Bytes), the space consumption of our algorithm is easily comparable with space requirements of a standard online string matching algorithm.  
For instance, if we suppose to search for a pattern of $30$ characters on a text of $3$ MB, the Boyer-Moore-Horspool algorithm \cite{Ref3} requires only $256$ Bytes (less than $0.1$\% of the text size), the BDM algorithm \cite{Ref4} requires approximately $8$ KB ($0.2$\% of text size), while more efficient solutions as the Berry-Ravidran and the WFR algorithms \cite{Ref35} require approximately $65$ KB ($2.1$\% of text size).
However, the space requirements of an online string matching algorithm does not depend on the size of the text, thus when the value of $n$ increases the gap may become considerable.

\subsection{Preprocessing and searching times}
In this section we compare the two approach for the text-sampling string matching problem, in terms of preprocessing and searching times. With the term \emph{preprocessing time} we refer to the time needed to construct the sampled text which will be used during the searching phase. We do not take into account in this measurements the preprocessing time needed by the online exact string matching algorithm used during the searching phase.
We will refer to \emph{searching time} as the time needed to perform searching of the pattern on both sampled and original texts, including any preprocessing of the underlying searching algorithm.

\begin{figure}
\begin{center}
\begin{tikzpicture}
	\begin{axis}[width={0.80\textwidth},ymin=0, height={5cm}, ymajorgrids = true,xtick={2,4,6,8,10,12,14,16,18,20}, title={Preprocessing times}, xlabel={rank of pivot char / number of removed chars}, ylabel={thousands of seconds}]
\addplot[thick,blue,smooth]  coordinates {
(1,22.51) (2,21.28) (3,20.68) (4,20.29) (5,20.21) (6,20.12) (7,19.63) (8,19.44) (9,19.38) (10,18.67) (11,18.44) (12,17.57) (13,17.61) (14,17.43) (15,17.45) (16,17.37) (17,17.08) (18,17.01) (19,17.06) (20,17.12)};
\addplot[thick,black,smooth]  coordinates {
(2,46.69) (4,47.70) (6,44.36) (8,39.55) (10,34.59) (12,29.90) (14,28.01) (16,25.43) (18,23.17) (20,21.66)};
\addplot[thick,gray,smooth]  coordinates {
(2,43.50) (4,44.51) (6,41.92) (8,37.58) (10,32.78) (12,29.09) (14,26.80) (16,24.52) (18,22.42) (20,20.99)};
\addplot[thick,brown,smooth]  coordinates {
(2,41.53) (4,42.81) (6,40.47) (8,36.36) (10,31.55) (12,28.36) (14,25.56) (16,23.87) (18,21.90) (20,20.44)};
\legend{new,$q=8$,$q=16$,$q=32$}
\end{axis}
\end{tikzpicture}\\
\end{center}
\caption{\label{exp_pre}Preprocessing times of the text sampled algorithms. Running times are expressed in thousands of seconds. The $x$ axis represents the rank of the pivot character in the case of the new algorithm, while represents the number of removed characters in the case of previous algorithms.}
\end{figure}
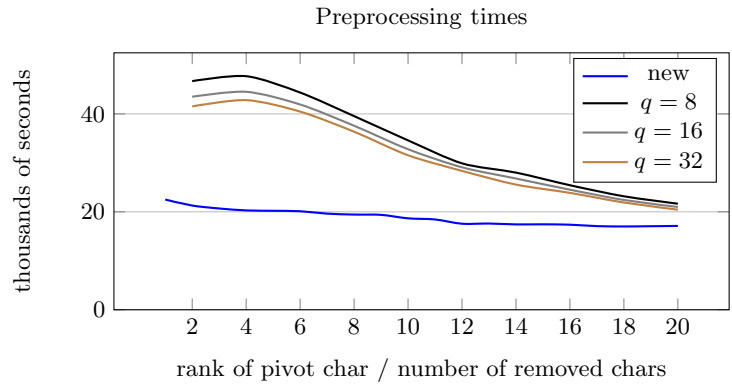

Fugure \ref{exp_pre} shows the preprocessing times of our algorithm together with those of the \textsc{Ots} algorithms (in the latter case we show preprocessing time for the three different values of $q\in \{8, 16, 32\}$).
It turns out that the new algorithm has always a faster preprocessing time, with a speed-up which goes from 50\%,  to 15\%, depending on the pivot element and on the number of removed characters. This is mainly correlated with size of the data structure constructed during the preprocessing phase and discussed in the previous section.


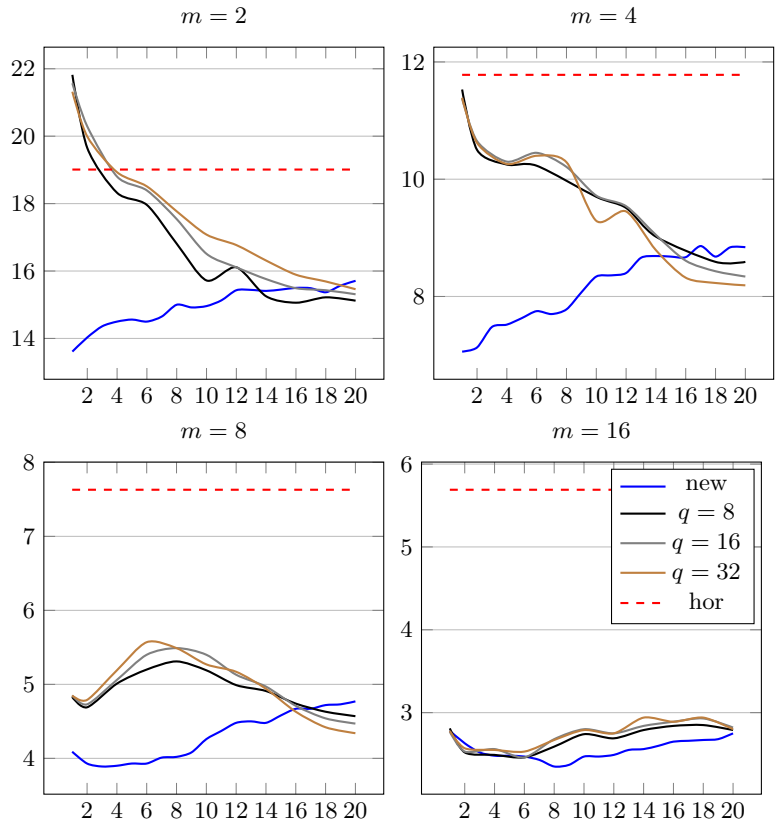
\begin{figure}[!t]
\begin{center}
\begin{tikzpicture}
	\begin{axis}[width={0.5\textwidth},height={6cm}, ymajorgrids = true,title={$m=2$},xtick={2,4,6,8,10,12,14,16,18,20}]
\addplot[thick,blue,smooth]  coordinates {
(1, 13.61) (2, 14.03) (3, 14.36) (4, 14.50) (5, 14.56) (6, 14.50) (7, 14.65) (8, 15.00) (9, 14.92) (10, 14.96) (11, 15.13) (12, 15.43) (13, 15.44) (14, 15.41) (15, 15.45) (16, 15.50) (17, 15.49) (18, 15.37) (19, 15.57) (20, 15.71)};
\addplot[thick,black,smooth]  coordinates {
(1,21.82) (2, 19.65) (4, 18.32) (6, 17.96) (8, 16.81) (10, 15.72) (12, 16.11) (14, 15.25) (16, 15.06) (18, 15.22) (20, 15.12)};
\addplot[thick,gray,smooth]  coordinates {
(1,21.57) (2, 20.29) (4, 18.80) (6, 18.39) (8, 17.54) (10, 16.51) (12, 16.11) (14, 15.76) (16, 15.49) (18, 15.43) (20, 15.31)};
\addplot[thick,brown,smooth]  coordinates {
(1,21.32) (2, 19.99) (4, 18.93) (6, 18.51) (8, 17.77) (10, 17.08) (12, 16.77) (14, 16.31) (16, 15.89) (18, 15.69) (20, 15.46)};
\addplot[thick,red, domain=1:20,dashed] ({\x},{19.01});
\end{axis}
\end{tikzpicture}
\begin{tikzpicture}
	\begin{axis}[width={0.5\textwidth},height={6cm}, ymajorgrids = true,title={$m=4$},xtick={2,4,6,8,10,12,14,16,18,20}]
\addplot[thick,blue,smooth]  coordinates {
(1,7.06)	(2,7.13)	(3,7.48)	(4,7.52)	(5,7.63)	(6,7.75)	(7,7.7)	(8,7.78)	(9,8.07)	(10,8.34)	(11,8.36)	(12,8.4)	(13,8.66)	(14,8.69)	(15,8.68)	(16,8.67)	(17,8.86)	(18,8.68)	(19,8.84)	(20,8.84)};
\addplot[thick,black,smooth]  coordinates {
(1,11.53)(2,10.5)	(4,10.25)	(6,10.23)	(10.12)	(10,9.7)	(12,9.51)	(14,9.02)	(8.76)	(18,8.59)	(20,8.59)};
\addplot[thick,gray,smooth]  coordinates {
(1,11.38)(2,10.65)	(4,10.3)	(6,10.45)	(8,10.21)	(10,9.72)	(12,9.54)	(14,9.05)	(16,8.61)	(18,8.43)	(20,8.34)};
\addplot[thick,brown,smooth]  coordinates {
(1,11.39)(2, 10.62)	(4, 10.26)	(6, 10.40)	( 8,10.29)	(10, 9.29)	(12, 9.45)	(14, 8.79)	( 16,8.32)	(18, 8.23)	(20, 8.19)};
\addplot[thick,red, domain=1:20,dashed] ({\x},{11.78});
\end{axis}
\end{tikzpicture}\\
\begin{tikzpicture}
	\begin{axis}[width={0.5\textwidth},height={6cm},	ymajorgrids = true,title={$m=8$},xtick={2,4,6,8,10,12,14,16,18,20}]
\addplot[thick,blue,smooth]  coordinates {
(1,4.09)	(2,3.93)	(3,3.89)	(4,3.90)	(5,3.93)	(6,3.93)	(7,4.01)	(8,4.02)	(9,4.08)	(10,4.26)	(11,4.37)	(12,4.48)	(13,4.5)	(14,4.48)	(15,4.58)	(16,4.67)	(17,4.67)	(18,4.72)	(19,4.73)	(20,4.77)};
\addplot[thick,black,smooth]  coordinates {
(1,4.83)(2,4.69)	(4,5.01)	(6,5.2)	(8,5.31)	(10,5.19)	(12,4.99)	(14,4.91)	(16,4.74)	(18,4.63)	(20,4.57)};
\addplot[thick,gray,smooth]  coordinates {
(1,4.85)(2,4.73)	(4,5.06)	(6,5.4)	(8,5.49)	(10,5.4)	(12,5.13)	(14,4.97)	(16,4.71)	(18,4.54)	(20,4.47)};
\addplot[thick,brown,smooth]  coordinates {
(1,4.85)(2,4.79)	(4,5.19)	(6,5.57)	(8,5.49)	(10,5.27)	(12,5.17)	(14,4.94)	(16,4.63)	(18,4.42)	(20,4.34)};
\addplot[thick,red, domain=1:20,dashed] ({\x},{7.63});
\end{axis}
\end{tikzpicture}
\begin{tikzpicture}
	\begin{axis}[width={0.5\textwidth},height={6cm},	ymajorgrids = true,title={$m=16$},xtick={2,4,6,8,10,12,14,16,18,20}]
\addplot[thick,blue,smooth]  coordinates {
(1,2.78)	(2,2.63)	(3,2.52)	(4,2.48)	(5,2.48)	(6,2.47)	(7,2.43)	(8,2.35)	(9,2.37)	(10,2.47)	(11,2.47)	(12,2.49)	(13,2.55)	(14,2.56)	(15,2.6)	(16,2.65)	(17,2.66)	(18,2.67)	(19,2.68)	(20,2.75)};
\addplot[thick,black,smooth]  coordinates {
(1,2.81) (2,2.52)	(4,2.49)	(6,2.46)	(8,2.59)	(10,2.74)	(12,2.69)	(14,2.79)	(16,2.84)	(18,2.85)	(20,2.79)};
\addplot[thick,gray,smooth]  coordinates {
(1,2.77) (2,2.53)	(4,2.56)	(6,2.46)	(8,2.68)	(10,2.8)	(12,2.75)	(14,2.84)	(16,2.89)	(18,2.93)	(20,2.82)};
\addplot[thick,brown,smooth]  coordinates {
(1,2.78) (2,2.57)	(4,2.55)	(6,2.53)	(8,2.67)	(10,2.79)	(12,2.75)	(14,2.94)	(16,2.89)	(18,2.94)	(20,2.8)};
\addplot[thick,red, domain=1:20,dashed] ({\x},{5.69});
\legend{new,$q=8$,$q=16$,$q=32$,hor}
\end{axis}
\end{tikzpicture}\\
\end{center}
\caption{\label{exp_short}Running times of the text sampling algorithms in the case of small patterns ($2\leq m \leq 16$). The dashed red line represent the running time of the original Horspool algorithm.
Running times (in the $y$ axis) are represented in thousands of seconds. The $x$ axis represents the rank of the pivot character in the case of the new algorithm, while represents the number of removed characters in the case of previous algorithms.}
\end{figure}

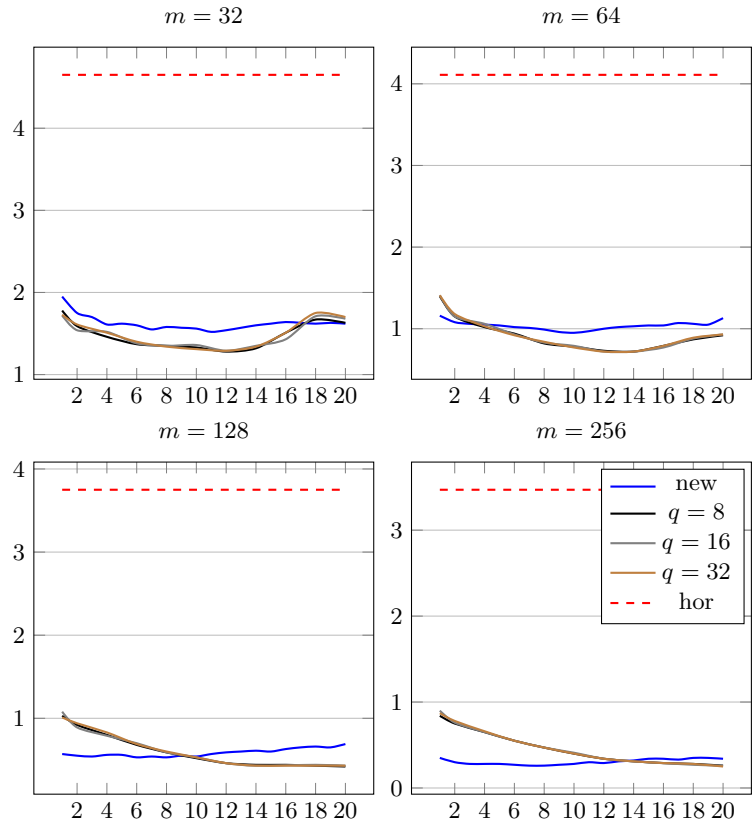
\begin{figure}[!t]
\begin{center}
\begin{tikzpicture}
	\begin{axis}[width={0.5\textwidth},height={6cm},	ymajorgrids = true,title={$m=32$},xtick={2,4,6,8,10,12,14,16,18,20}]
\addplot[thick,blue,smooth]  coordinates {
(1,1.95)	(2,1.75)	(3,1.7)	(4,1.61)	(5,1.62)	(6,1.60)	(7,1.55)	(8,1.58)	(9,1.57)	(10,1.56)	(11,1.52)	(12,1.54)	(13,1.57)	(14,1.6)	(15,1.62)	(16,1.64)	(17,1.63)	(18,1.62)	(19,1.63)	(20,1.62)};
\addplot[thick,black,smooth]  coordinates {
(1,1.78) (2,1.59)	(4,1.46)	(6,1.37)	(8,1.35)	(10,1.33)	(12,1.28)	(14,1.32)	(16,1.51)	(18,1.67)	(20,1.63)};
\addplot[thick,gray,smooth]  coordinates {
(1,1.72) (2,1.54)	(4,1.52)	(6,1.38)	(8,1.35)	(10,1.36)	(12,1.29)	(14,1.35)	(16,1.43)	(18,1.71)	(20,1.68)};
\addplot[thick,brown,smooth]  coordinates {
(1,1.73) (2,1.61)	(4,1.51)	(6,1.4)	(8,1.34)	(10,1.31)	(12,1.29)	(14,1.34)	(16,1.51)	(18,1.75)	(20,1.7)};
\addplot[thick,red, domain=1:20,dashed] ({\x},{4.65});
\end{axis}
\end{tikzpicture}
\begin{tikzpicture}
	\begin{axis}[width={0.5\textwidth},height={6cm},	ymajorgrids = true,title={$m=64$},xtick={2,4,6,8,10,12,14,16,18,20}]
\addplot[thick,blue,smooth]  coordinates {
(1,1.16)	(2,1.08)	(3,1.06)	(4,1.05)	(5,1.04)	(6,1.02)	(7,1.01)	(8,0.99)	(9,0.96)	(10,0.95)	(11,0.97)	(12,1)	(13,1.02)	(14,1.03)	(15,1.04)	(16,1.04)	(17,1.07)	(18,1.06)	(19,1.05)	(20,1.13)};
\addplot[thick,black,smooth]  coordinates {
(1,1.40) (2,1.15)	(4,1.02)	(6,0.94)	(8,0.82)	(10,0.78)	(12,0.73)	(14,0.72)	(16,0.79)	(18,0.87)	(20,0.92)};
\addplot[thick,gray,smooth]  coordinates {
(1,1.41) (2,1.15)	(4,1.06)	(6,0.93)	(8,0.83)	(10,0.79)	(12,0.72)	(14,0.72)	(16,0.77)	(18,0.88)	(20,0.93)};
\addplot[thick,brown,smooth]  coordinates {
(1,1.41) (2,1.18)	(4,1.03)	(6,0.92)	(8,0.84)	(10,0.77)	(12,0.72)	(14,0.72)	(16,0.79)	(18,0.89)	(20,0.93)};
\addplot[thick,red, domain=1:20,dashed] ({\x},{4.11});
\end{axis}
\end{tikzpicture}\\
\begin{tikzpicture}
	\begin{axis}[width={0.5\textwidth},height={6cm}, ymajorgrids = true,title={$m=128$},xtick={2,4,6,8,10,12,14,16,18,20}]
\addplot[thick,blue,smooth]  coordinates {
(1,0.57)	(2,0.55)	(3,0.54)	(4,0.56)	(5,0.56)	(6,0.53)	(7,0.54)	(8,0.53)	(9,0.55)	(10,0.54)	(11,0.57)	(12,0.59)	(13,0.6)	(14,0.61)	(15,0.6)	(16,0.63)	(17,0.65)	(18,0.66)	(19,0.65)	(20,0.69)};
\addplot[thick,black,smooth]  coordinates {
(1,1.03) (2,0.92)	(4,0.8)	(6,0.68)	(8,0.59)	(10,0.52)	(12,0.46)	(14,0.44)	(16,0.44)	(18,0.43)	(20,0.42)};
\addplot[thick,gray,smooth]  coordinates {
(1,1.08) (2,0.89)	(4,0.79)	(6,0.7)	(8,0.59)	(10,0.53)	(12,0.46)	(14,0.43)	(0.43)	(18,0.44)	(20,0.43)};
\addplot[thick,brown,smooth]  coordinates {
(1,1.01) (2,0.94)	(4,0.83)	(6,0.69)	(8,0.6)	(10,0.53)	(12,0.46)	(14,0.43)	(16,0.43)	(18,0.43)	(20,0.43)};
\addplot[thick,red, domain=1:20,dashed] ({\x},{3.75});
\end{axis}
\end{tikzpicture}
\begin{tikzpicture}
	\begin{axis}[width={0.5\textwidth},height={6cm},	ymajorgrids = true,title={$m=256$},xtick={2,4,6,8,10,12,14,16,18,20}]
\addplot[thick,blue,smooth]  coordinates {
(1,0.35)	(2,0.3)	(3,0.28)	(4,0.28)	(5,0.28)	(6,0.27)	(7,0.26)	(8,0.26)	(9,0.27)	(10,0.28)	(11,0.3)	(12,0.29)	(13,0.31)	(14,0.32)	(15,0.34)	(16,0.34)	(17,0.33)	(18,0.35)	(19,0.35)	(20,0.34)};
\addplot[thick,black,smooth]  coordinates {
(1,0.84) (2,0.75)	(4,0.65)	(6,0.55)	(8,0.47)	(10,0.4)	(12,0.34)	(14,0.31)	(16,0.29)	(18,0.28)	(20,0.26)};
\addplot[thick,gray,smooth]  coordinates {
(1,0.90) (2,0.76)	(4,0.65)	(6,0.55)	(8,0.47)	(10,0.41)	(12,0.34)	(14,0.31)	(0.29)	(18,0.27)	(20,0.26)};
\addplot[thick,brown,smooth]  coordinates {
(1,0.87) (2,0.78)	(4,0.66)	(6,0.55)	(8,0.47)	(10,0.4)	(12,0.34)	(14,0.31)	(16,0.29)	(18,0.28)	(20,0.25)};
\addplot[thick,red, domain=1:20,dashed] ({\x},{3.47});
\legend{new,$q=8$,$q=16$,$q=32$,hor}
\end{axis}
\end{tikzpicture}\\
\end{center}
\caption{\label{exp_long}Running times of the text sampling algorithms in the case of long patterns ($32\leq m \leq 256$). The dashed red line represent the running time of the original Horspool algorithm.
Running times (in the $y$ axis) are represented in thousands of seconds. The $x$ axis represents the rank of the pivot character in the case of the new algorithm, while represents the number of removed characters in the case of the \textsc{Ots} algorithms.}
\end{figure}

Figure \ref{exp_short} and Figure \ref{exp_long} show the searching times of the text sampling algorithms in the case of short patterns ($2\leq m \leq 16$) and long patterns ($32\leq m \leq 256$), respectively. The dashed red line represents the running time of the original Horspool algorithm.
Running times (in the $y$ axis) are represented in thousands of seconds. The $x$ axis represents the rank of the pivot character in the case of the new algorithm, while represents the number of removed characters in the case of the \textsc{Ots} algorithms.

In the case of short patterns best results are obtained by the new sampled approach selecting a small-rank pivot character. If compared with the original Horspool algorithm, the speed up obtained by the new approach goes from 32\% (for $m=2$) to 64\% (for m=16), while the gain obtained in comparison with the \textsc{Ots} algorithm decreases as the length of the pattern increases and going from 13\%, for $m=2$, to 7.7\%, in the case of $m=16$.

In the case of long patterns the difference between the running times of the two algorithms is negligible. However, if compared with the original Horspool algorithm, the speed up is much more evident and goes from 66\% (for $m=32$) to 91\% (for $m=256$).

\section{Conclusions and Future Works} \label{sec:conclusions}
We presented a new approach to the sampled string matching problem based on alphabet reduction and characters distance sampling. In our solution we divide the text in blocks of size $k$ and sample positions of such characters inside the blocks. During the searching phase the sampled data is used to filter candidate occurrences of the pattern. All the occurrences which are found during this first step are then verified for a whole match in the original text.

Our algorithm is faster than previous solutions in the case of short patterns and may require only 5\% of additional extra space. Despite this good performance we also proved that, when the underlying algorithm used for searching the sampled text for the sampled pattern achieves optimal worst-case and average-case time complexities, then also our algorithm attains the same optimal complexities, at least for patterns with a length of (at most) few hundreds of characters.

We applied our solution only to the case of a natural language text with a rather wide alphabet since the current approach doesn't work efficiently for small alphabets. It turns out indeed that the number of false positives located during the filtering phase increases as the size of the alphabet decreases. Thus it should be interesting to find a non-trivial strategy to extend this kind of solution also to the case of sequences over a small alphabet, like genome or protein sequences. We also wonder whether indexed solutions, as those based on the suffix tree, the suffix array and the FM-index of the text, should take advantage of the new text sampling approach. We intend to go in such directions in our future works.




\end{document}